\newcolumntype{d}[1]{D{.}{.}{#1}}
\newcolumntype{C}{>{$}c<{$}} 
\DeclareMathOperator*{\argmin}{arg\,min}
\newtheorem{theorem}{Theorem}[section]
\theoremstyle{definition}
\newtheorem{definition}[theorem]{Definition}
\newtheorem{lemma}[theorem]{Lemma}
\newtheorem{prop}[theorem]{Proposition} 
\newtheorem{example}{Example} 
\newtheorem{portfolio}[theorem]{Portfolio} 
\newcommand{\E}{{\mathbb{E}}}
\newcommand{\R}{{\mathbb{R}}}
\newcommand{\N}{{\mathbb{N}}}
\renewcommand{\P}{{\mathbb{P}}}
\newcommand{\RC}{{\mathcal{RC}}}
\newcommand{\MR}{{\mathcal{MR}}}
\newcommand{\VaR}{{\mathrm{VaR}}}
\newcommand{\ES}{{\mathrm{ES}}}
\renewcommand{\Pr}{{\boldsymbol{P}}}
\newcommand{\p}{{\boldsymbol{p}}}
\newcommand{\B}{{\boldsymbol{B}}}
\newcommand{\vv}{{\boldsymbol{v}}}
\newcommand{\vstar}{v^*}
\newcommand{\vvstar}{\vv^*}
\newcommand{\w}{{\boldsymbol{w}}}
\newcommand{\Mu}{{\boldsymbol{\mu}}}
\newcommand{\Epsilon}{{\boldsymbol{\epsilon}}}
\newcommand{\Id}{{\mathds{1}}}
\DeclareMathOperator*{\diag}{diag}
\newcommand{\x}{{\boldsymbol{x}}}
\newcommand{\z}{{\boldsymbol{z}}}
\newcommand{\bb}{{\boldsymbol{b}}}
\newcommand{\bfr}{{\boldsymbol{r}}}
\newcommand{\sdrp}{\texttt{sdrp}}
\newcommand{\msr}{\texttt{msr}}
\newcommand{\mmv}{\texttt{mmv}}
\newcommand{\gmv}{\texttt{gmv}}
\newcommand{\ew}{\texttt{ew}}
\newcommand{\grp}[1]{\texttt{grp#1}}
\newcommand{\rpa}[1]{\texttt{rpa{#1}}}
\newcommand{\rpb}[1]{\texttt{rpb#1}}
\DeclareMathOperator{\Lp}{{\mathbb{L}^2}}
\newcommand{\tikzcircle}[2][red,fill=red]{\tikz[baseline=-1.0ex]\draw[#1,radius=#2, draw=none] (0,0) circle ;}%
\newcommand{\tikzsquare}[1][red,fill=red]{\tikz[baseline=-1.0ex]\draw[#1] (0.05,0.05) rectangle (-0.15,-0.15) ;}%
\newcommand{\tikzline}[1][red,fill=red]{\tikz[baseline=-1.0ex]\draw[#1, draw=none] (0.07,0.07) rectangle (-0.21,-0.21) ;}%
\definecolor{col1}{HTML}{A6CEE3}
\definecolor{col2}{HTML}{1F78B4}
\definecolor{col3}{HTML}{B2DF8A}
\definecolor{col4}{HTML}{33A02C}
\definecolor{col5}{HTML}{FB9A99}
\definecolor{col6}{HTML}{E31A1C}
\definecolor{col7}{HTML}{FDBF6F}
\definecolor{col8}{HTML}{FF7F00}
\definecolor{colAsset1}{HTML}{1f77b4}
\definecolor{colAsset2}{HTML}{ff7f0e}
\definecolor{colAsset3}{HTML}{2ca02c}
\definecolor{colAsset4}{HTML}{d62728}
\definecolor{colAsset5}{HTML}{9467bd}
\definecolor{colAsset6}{HTML}{8c564b}
\definecolor{colAsset7}{HTML}{e377c2}
\definecolor{colAsset8}{HTML}{7f7f7f}
\definecolor{colAsset9}{HTML}{bcbd22}
\definecolor{colAsset10}{HTML}{17becf}
\definecolor{colAsset11}{RGB}{50,50,50}
\definecolor{colTable0}{HTML}{31A354}
\definecolor{colTable1}{HTML}{A1D99B}
\definecolor{colTable2}{HTML}{E5F5E0}
\definecolor{colSolver1}{HTML}{f8766d}
\definecolor{colSolver2}{HTML}{7cae00}
\definecolor{colSolver3}{HTML}{00bfc4}
\definecolor{colSolver4}{HTML}{c77cff}
\title{\huge{Risk Budgeting Portfolios from Simulations\footnote{We thank the participants of the 3rd Insurance Data Science Conference (held virtually) for insightful discussions.}}}
\author[1]{Bernardo Freitas Paulo da Costa\footnote{E-mail: \href{mailto:bernardofpc@im.ufrj.br}{bernardofpc@im.ufrj.br}}}
\author[2]{Silvana M. Pesenti\footnote{E-mail: \href{mailto:silvana.pesenti@utoronto.ca}{silvana.pesenti@utoronto.ca}}. }
\author[3]{Rodrigo S. Targino\footnote{E-mail: \href{mailto:rodrigo.targino@fgv.br}{rodrigo.targino@fgv.br}}}
\affil[1]{\small{Institute of Mathematics, UFRJ}}
\affil[2]{\small{Department of Statistical Sciences, University of Toronto}}
\affil[3]{\small{School of Applied Mathematics, Getulio Vargas Foundation, Brazil}}
\date{\today}
\begin{document}
	\maketitle

\begin{abstract}
Risk budgeting is a portfolio strategy where each asset contributes a prespecified amount to the aggregate risk of the portfolio.
In this work, we propose an efficient
numerical framework that uses only simulations of returns for estimating risk budgeting portfolios.
Besides a general cutting planes algorithm for determining the weights of risk budgeting portfolios for arbitrary coherent distortion risk measures,
we provide a specialised version for the Expected Shortfall,
and a tailored Stochastic Gradient Descent (SGD) algorithm,
also for the Expected Shortfall.
We compare our algorithm to standard convex optimisation solvers and illustrate different risk budgeting portfolios,
constructed using an especially designed Julia package,
on real financial data and compare it to classical portfolio strategies. 
\end{abstract}

\textbf{Keywords} Portfolio Allocation, Risk Parity, coherent risk measures, Stochastic Optimisation

	\section{Introduction}
    Since the seminal work of Markowitz, investors are concerned with portfolio strategies that aim to maximise reward while simultaneously accounting for its riskiness. While the reward of a portfolio strategy is commonly captured via its expected return, the risk of a portfolio may be quantified via e.g., a risk or deviation measure of the aggregate portfolio. 
    
    Another important aspect of portfolio strategies is the notion of diversification. Examples of portfolios that may be viewed as diversified are the equally weighted portfolio or constant proportion strategy, such as the $60/40$ portfolio (60\% equities, 40\% stocks). These example portfolios, while they may be diversified in terms of asset allocation, are not diversified in terms of how each asset contributes to the aggregate portfolio risk. 
    In this paper, we focus on risk based diversification strategies called risk budgeting portfolios. Risk budgeting portfolios are portfolio strategies where each asset contributes a prespecified amount to the aggregate risk of the portfolio. The portfolio where each asset contributes equally to the overall risk -- termed the risk parity portfolio -- was coined by \cite{Qian2005RiskParity} and formally introduced by \cite{Maillard2010JPM}. Today, the best estimate (by the investment company Neuberger Berman\footnote{https://www.nb.com/en/global/insights/risk-parity-and-the-fallacy-of-the-single-cause}) is that these strategies have about \$120 billion under management. The conceptual difference between the risk parity portfolio and the equally weighted portfolio is that while the latter invests equally (in percentage) in each asset, the former has portfolio weights so that the risk of the portfolio is equally allocated to each asset. Thus, risk budgeting portfolios are strategies that diversify the risk of the aggregate portfolio among its assets.  
    
Risk budgeting portfolios have only been mathematically formalised recently in \cite{Maillard2010JPM}, and since then the related literature is growing. Early works on the topic focus on the volatility (variance) of the portfolio, see \cite{Maillard2010JPM} and \cite{Roncalli2013Book} for a detailed review. Extension to a multi-period setting has been considered in \cite{Li2021EJOR} using the volatility measure. \cite{Cesarone2017JGO} study the connection between risk parity portfolios with the variance as a measure of risk and the equal variance portfolio, a portfolio where the variances of each asset are equal.
Recently, \cite{Bellini2021EJOR} studied risk parity for expectile risk measures
and~\cite{anis2022EJOR} considers risk parity with additional cardinality constraints.

Here we develop a framework based on stochastic optimisation to the problem of finding a long-only risk budgeting portfolio based on convex risk measures.
Our algorithms use nothing more than simulations from the multivariate distribution of the assets' returns.
This is desirable since, for general joint distributions of assets' returns, there is no closed-form expression for the risk measure.
Furthermore, depending on the estimators for the return distributions, there might not be a closed-form expression for the probability density functions of the returns, however, one might still be able to \emph{sample} from such distributions, e.g., using Markov Chain Monte Carlo methods.
The proposed algorithms are implemented on a companion Julia package \citep{RiskBudgeting.jl} - called \texttt{RiskBudgeting} -,
which we used to generate the risk budgeting portfolios studied in this paper. When the risk measure is chosen as the Expected Shortfall (ES) (also known as Conditional Value-at-Risk (CVaR)),
we also provide specialised algorithms to take advantage of its special structure.

Although in less generality than our proposal, other authors have also used the ES in risk budgeting constructions. \cite{Boudt2012JR} study two investment strategies, the first strategy minimises the largest ES risk contribution (see Section \ref{sec:MRC}, below, for a definition), and the second portfolio strategy imposes  bounds on the ES risk contributions. In \cite{Darolles2015book_chapter} the authors claim to construct a risk budgeting portfolio based on both the Value-at-Risk (VaR) and the ES, writing the VaR or ES risk contributions as conditional expectations (see \cite{Tasche99}) and setting them to some prespecified budget. However, no detail on the considered optimisation problem is given. More recently, \cite{jurczenko2019expected} construct risk budgeting portfolios based on the minimisation of the sum of squared differences between the ES risk contributions and the target budget. This strategy was used, for example, in \cite{bruder2012managing} but does not lead to a convex optimisation problem. Moreover, it requires an explicit expression for the risk contributions, which is in contrast to the algorithms proposed in this paper that are designed to work on simulated scenarios of returns. 
    
    Other risk parity formulations (with accompanying optimisation algorithms) were proposed in \cite{Maillard2010JPM} and \cite{bai2016least}. In \cite{Maillard2010JPM}, the authors aim at minimising the sum of all squared differences between pairs of risk contributions. On the other hand, \cite{bai2016least} propose a \textit{least squares} formulation, minimising the squared difference between risk contributions and some auxiliary parameter, with the optimisation performed with respect to both the weights and the auxiliary parameter. Other formulations are found in \cite{feng2015scrip}.

The optimisation problem solved in this paper has its roots in the logarithm formulation of \cite{Maillard2010JPM},
where the risk budgeting portfolio is studied with the standard deviation as a risk measure.
This formulation was subsequently used in \cite{chaves2012efficient}, \cite{spinu2013algorithm}, and \cite{griveau2013fast},
where efficient algorithms to compute standard deviation risk budgeting portfolios are derived.
Similarly to our contribution, \cite{mausser2018long} propose the usage of convex optimisation algorithms
to find long-only risk parity portfolios based on the ES.
Differently from our approach, these authors assume the loss distribution is discrete over the simulated scenarios,
while our general cutting planes algorithm naturally handles continuous distributions. Moreover, we give a scenario decomposition algorithm that allows using very large samples.

In this work, we propose a general framework for risk budgeting
portfolios for coherent risk measures,
for which the logarithm formulation leads to a convex optimisation problem. The problem is solved through a cutting planes algorithm
that exploits the convexity of the risk measure to separate the evaluation of the risk from the optimisation itself. This allows us to deal with very large samples. Specialising for ES, we exploit the Rockafellar-Uryasev representation of ES, i.e.,
writing ES as an infimum,
for deriving algorithms for the corresponding risk budgeting portfolios.

The remainder of the paper is structured as follows.
Section~\ref{sec:rb_portfolios} defines risk contributions and risk budgeting portfolios.
In Section~\ref{sec:algorithms}, we propose three algorithms for calculating risk budgeting portfolios for coherent risk measures from simulation. Section~\ref{sec:numerical_studies} is devoted to a comparison of our cutting plane algorithm
to standard convex optimisation solvers for three different risk measures. Section~\ref{sec:alternative-strategies} discusses alternative portfolio strategies which are compared in Section~\ref{sec:application}.

\section{Risk Budgeting Portfolios} \label{sec:rb_portfolios}

We work on a probability space $(\Omega, \mathcal{A}, \P)$ and denote by $\Lp = \Lp(\Omega, \mathcal{A}, \P)$ the space of all square-integrable random variables. For a random variable $X \in \Lp$, we denote its distribution function by $F_X(x) = \P(X \le x)$ and its (left-) quantile, also called Value-at-Risk (VaR), by $F^{-1}_X(u)= \VaR_u(X) = \inf\{y \in \R~|~ F_X(y) \ge u\}$, for $u \in [0,1]$, where we adopt the convention that $\inf \emptyset = + \infty$.

\subsection{The Portfolio and its Risk Contributions}\label{sec:MRC}
We consider a two period economy
($t_0 := 0$ and $T := 1$) consisting of $d \in \N$ assets
with prices $\p:= (p_1,\ldots, p_d)$ at time $t_0$,
and corresponding (random) prices $\Pr:= (P_1, \ldots, P_d)$ at time $ T$. The investor has an initial endowment $v_0 > 0$
and invests $v_i = w_i v_0 \ge 0$ dollars in asset $i$ at time $t_0$.
We call $v_i$ the \emph{exposure} of the portfolio to asset $i$,
and $\w := (w_1, \ldots, w_d)$ the strictly positive portfolio \emph{weights} with $\sum_{i=1}^d w_i = 1$.
Since the investment is non-negative, the investor only considers buy-and-hold strategies, i.e., the number of shares remain constant.
The investor's \emph{loss} of a portfolio with exposure $\vv := (v_1, \ldots, v_d)$ at time $t_0$ is given by
\begin{equation*}
    L(\vv):= 
    - \sum_{i = 1}^d v_i \left(\frac{P_i}{p_i}-1\right)\, ,
\end{equation*}
and its risk is assessed through a \textit{coherent risk measure} $\rho\colon \Lp \to \R$ as defined in \cite{Artzner1999MF}.

\begin{definition}	
For a coherent risk measure $\rho$ and a portfolio with exposure $\vv$,
we define the \emph{marginal risk} of asset $i$, $i \in \{1, \ldots, d\}$, by  
\begin{equation*}
    \MR_i(\vv) := \frac{\partial}{\partial v_i} \rho\left(L(\vv)\right)
\end{equation*}
and the
\emph{risk contribution} of asset $i$, $i \in \{1, \ldots, d\}$, via
\begin{equation*}
	\RC_i(\vv):= v_i \,  \MR_i(\vv) \, .
\end{equation*}
\end{definition}

For simplicity, we assume throughout the text
that the joint distribution of the losses is continuous,
so that the partial derivatives above are well-defined.
Properties of the underlying distribution
that lead to differentiability of the portfolio risk,
are shown in  \cite{hong2009simulating} (Asms. 1--3)
for the Expected Shortfall (ES),
and further generalised in \cite{Tsanakas2016RA} (Prop. 1.) and \cite{Pesenti2021Cascade} (Prop. 3.) to distortion risk measures.
We also refer the reader to \cite{Grechuk2023EJOR} who proposes
the notion of \emph{extended gradients} for convex functionals
to overcome the need of requiring continuity of the underlying distributions
by proving the existence and uniqueness of allocation for non-differentiable but convex risk measures using the extended gradient
to define risk contributions.
\begin{lemma}
\label{lemma: Euler-allocation}
For a portfolio with exposure $\vv$ and a coherent risk measure $\rho$, it holds that
\begin{equation*}
	\rho(L(\vv)) = \sum_{i = 1}^d v_i \, \MR_i(\vv) 
	= \sum_{i = 1}^d \RC_i(\vv)    \, .
\end{equation*}
\end{lemma}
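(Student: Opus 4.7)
The plan is to recognize this as an instance of Euler's theorem for positively homogeneous functions of degree one, applied to the map $\vv \mapsto \rho(L(\vv))$.

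First I would observe that the loss functional is linear in the exposures: writing $R_i := P_i/p_i - 1$, we have $L(\vv) = -\sum_{i=1}^d v_i R_i$, so that $L(\lambda \vv) = \lambda L(\vv)$ for every $\lambda > 0$. Since $\rho$ is coherent, in particular positively homogeneous, it follows that
\begin{equation*}
\rho(L(\lambda \vv)) = \rho(\lambda L(\vv)) = \lambda \, \rho(L(\vv))\, ,\qquad \lambda > 0\, .
\end{equation*}
Thus the real-valued map $g(\vv) := \rho(L(\vv))$ is positively homogeneous of degree one on the open set of strictly positive exposures.

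Next I would invoke Euler's relation: differentiating the identity $g(\lambda \vv) = \lambda g(\vv)$ with respect to $\lambda$ and evaluating at $\lambda = 1$ yields
\begin{equation*}
g(\vv) = \sum_{i=1}^d v_i \, \frac{\partial g}{\partial v_i}(\vv) = \sum_{i=1}^d v_i \, \MR_i(\vv)\, ,
\end{equation*}
which is exactly the first asserted equality. The second equality is then immediate from the definition $\RC_i(\vv) = v_i \MR_i(\vv)$.

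The only delicate point is justifying that the partial derivatives $\MR_i(\vv)$ exist, so that Euler's relation applies pointwise rather than merely in a weak/subdifferential sense. Here I would simply appeal to the standing assumption (stated just before the lemma) that the joint distribution of the losses is continuous, together with the references \citep{hong2009simulating,Tsanakas2016RA,Pesenti2021Cascade} cited there, which guarantee differentiability of $g$ for ES and, more generally, for coherent distortion risk measures. Under those conditions the above chain rule computation is rigorous and the proof is complete.
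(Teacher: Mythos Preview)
Your proposal is correct and is essentially the argument the paper has in mind: the paper does not spell out a proof of the lemma but simply notes, immediately after the statement, that ``by homogeneity of $\rho(L(\vv))$, i.e., $\rho(L(\lambda \vv)) = \lambda \rho(L(\vv))$ for all $\lambda \ge 0$, the risk contribution $\RC_i$ is equal to the Euler allocation'' and refers to \cite{Tasche99}. Your derivation via differentiating $g(\lambda\vv)=\lambda g(\vv)$ at $\lambda=1$ is exactly the standard Euler-theorem computation underlying that remark, and your appeal to the standing differentiability assumption is the appropriate justification.
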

Note that by homogeneity of $\rho(L(\vv))$, i.e., $\rho(L(\lambda \vv)) = \lambda \rho(L(\vv))$ for all $\lambda \ge 0$, the risk contribution $\RC_i$ is equal to the Euler allocation and the Aumann-Shapley capital allocation of asset $i$, see e.g., \cite{Tasche99}.

We focus on distortion risk measures that are convex on the space of random variables, and which are defined via the Choquet integral
\begin{align*}
\rho(L)
	&= - \int_{-\infty}^0 1 - g(1 -  F_L(x)) \,\mathrm{d}x + \int_0^{+ \infty}g(1 - F_L(x))\,\mathrm{d}x\,,
\end{align*}
whenever at least one of the two integrals is finite, and where $g \colon [0,1] \to [0,1]$ is a non-decreasing and concave function satisfying $g(0) = 0$ and $g(1) = 1$. We further assume that $g$ is absolutely continuous which gives the following representation of the distortion risk measure \citep{Dhaene2012EAJ}
\begin{equation}\label{eq:rm-gamma}
    \rho(L) = \int_0^1 \gamma(u) F_L^{-1}(u) \, du\,,
\end{equation}
where $\gamma(u) = \partial_- g(x)|_{x = 1 - u}, ~ 0 < u < 1$, and $\partial_-$ denotes the derivative from the left. A well-known example of distortion risk measures is ES at level $\alpha \in (0,1)$ with $\gamma(u) = \frac{1}{1-\alpha} \Id_{\{u > \alpha\}}$.

For a portfolio with exposure $\vv$, the risk contribution to asset $i$ of a distortion risk measure with representation \eqref{eq:rm-gamma} is \citep{Tsanakas2009IME}
\begin{equation*}
	\MR_i(\vv) = \E\left[L_i\, \gamma\left(F_{L}(L)\right)\right]\,,
\end{equation*}
where $L_i:=  - v_i \left(\frac{P_i}{p_i}-1\right)$.

We further study the \emph{Entropic Value-at-Risk} (EVaR) at level $\alpha \in [0,1)$
\begin{equation*}
    EVaR_\alpha(L) := \inf_{t > 0} \; t \log \frac{1}{1 - \alpha} \E[ e^{L/t} ].
\end{equation*}
If $\alpha = 0$, this reduces to the expected value,
whereas for $\alpha \to 1$ it converges to the essential supremum of $L$
\citep{ahmadi2012entropic}.

\begin{example}
\label{ex:Gaussian-ES}
Assume the returns $\frac{P_i - p_i}{p_i}$ are multivariate Gaussian
with mean vector $\Mu$ and covariance matrix $\Sigma$.
Then, the portfolio loss follows a Gaussian distribution
with mean $\mu_L:= -\vv^\top \Mu$ and variance $\sigma_L^2 := \vv^\top \Sigma \vv$.
Moreover,
\[
\ES_\alpha \left(L(\vv)\right)
=  \mu_L + \frac{ \phi \left(\Phi^{-1}(\alpha)\right)}{1-\alpha}\sigma_L \, \text{ \ and \ } \text{EVaR}_{\alpha}(L(\vv)) = \mu_L + \sqrt{-2\log(\alpha)}\sigma_L  ,
\]
where $\Phi$ and $\phi$ denote the distribution function and density of a standard normal random variable. The risk contribution of asset $i$ for ES and EVaR are, respectively,
\begin{align*}
	\MR^{\ES}_i(\vv) &=  \E[L_i\, | \, L(\vv) \ge \VaR_\alpha(\vv)]= -\mu_i +  \frac{ \phi \left(\Phi^{-1}(\alpha)\right)}{1-\alpha} \, \frac{(\Sigma \vv)_i}{\sigma_L} \, , \\
 	\MR^{\text{EVaR}}_i(\vv) &= -\mu_i +  \sqrt{-2\log(\alpha)} \, \frac{(\Sigma \vv)_i}{\sigma_L} \, ,
\end{align*}
with $(\Sigma \vv)_i$ being the $i^\text{th}$ component of the vector $\Sigma \vv$.
\end{example}

\subsection{The Risk Budgeting Portfolio}
Assume an investor has \emph{risk appetite} $B^\dagger$ and aims to invest in a portfolio such that each asset has a (predefined) risk contribution. That is the risk appetite is such that $B^\dagger := \sum_{i = 1}^d B_i$, where $B_i>0$ corresponds to the contribution of the $i^\text{th}$ asset to the portfolio risk.
We denote $\B := (B_1, \dots, B_d)$ and call $\B$ the \emph{risk budget}.
The following definition of an investment portfolio is inspired by \cite{Roncalli2013Book}.
While the results in this section can be found, e.g., in \cite{Maillard2010JPM}, we provide short proofs for completeness.

\begin{definition}
For a risk budget $\B = (B_1, \dots, B_d)$ and corresponding risk appetite $B^\dagger  = \sum_{i = 1}^d B_i$, the \emph{risk budgeting} (RB) portfolio is  defined by the exposures $\vv$ that satisfy
\begin{equation*}
	B_i = \RC_i(\vv)\,, \qquad \text{for all} \quad i=1, \ldots, d\,.
\end{equation*}
\end{definition}
For a coherent risk measure, it holds by Lemma \ref{lemma: Euler-allocation} that $B^\dagger = \rho(\vv)$,
that is the risk appetite is equal to the portfolio risk of the RB portfolio. Thus, an investor with risk budget $\B$ who invests in a risk budgeting portfolio,
thus not only specifies the risk contribution of each asset within the portfolio but also the overall risk of the portfolio.
An example of a risk budgeting portfolio is the \emph{risk parity portfolio},
in which every asset has equal risk contribution, that is $\B := (\frac{1}{d}, \ldots, \frac{1}{d}) B^\dagger$.

\begin{prop}
\label{prop:RB-equivalent}
For a coherent risk measure $\rho$ and a risk budget $\B$, the RB portfolio is the portfolio with exposure $\vv$ that satisfies
\begin{align}
	\label{eq:RB-cross-condition}
	B_i \, \RC_j(\vv) & = B_j\, \RC_i(\vv)\,,
	\qquad \text{for all} \quad i, j =1, \ldots, d\,, \\
	\label{eq:RB-total-condition}
	B^\dagger & = \rho(L(\vv)) \,,
\end{align}
where  $B^\dagger$ is the risk appetite corresponding to the risk budget $\B$.
\end{prop}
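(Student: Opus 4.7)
The plan is to prove both directions of the equivalence between the defining condition $B_i = \RC_i(\vv)$ and the pair of conditions \eqref{eq:RB-cross-condition}--\eqref{eq:RB-total-condition}. Both directions are short and follow essentially by bookkeeping, with the only non-trivial ingredient being the Euler-type identity from Lemma~\ref{lemma: Euler-allocation}.

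For the forward direction, I would start from the RB definition $B_i = \RC_i(\vv)$ for all $i$. Substituting into $B_i \, \RC_j(\vv)$ yields $\RC_i(\vv)\, \RC_j(\vv) = B_j\, \RC_i(\vv)$, giving \eqref{eq:RB-cross-condition} immediately. For \eqref{eq:RB-total-condition}, I would sum $B_i = \RC_i(\vv)$ over $i$ and apply Lemma~\ref{lemma: Euler-allocation}, which gives $\sum_i \RC_i(\vv) = \rho(L(\vv))$ and therefore $B^\dagger = \sum_i B_i = \rho(L(\vv))$.

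For the reverse direction, I would fix $i$ and sum \eqref{eq:RB-cross-condition} over $j \in \{1,\dots,d\}$ to obtain
\begin{equation*}
B_i \sum_{j=1}^d \RC_j(\vv) = \left(\sum_{j=1}^d B_j \right) \RC_i(\vv) = B^\dagger \, \RC_i(\vv).
\end{equation*}
By Lemma~\ref{lemma: Euler-allocation}, the left-hand side equals $B_i \, \rho(L(\vv))$, and by \eqref{eq:RB-total-condition} this in turn equals $B_i \, B^\dagger$. Since each $B_i > 0$ we have $B^\dagger > 0$, so dividing yields $B_i = \RC_i(\vv)$, which is exactly the defining relation of the RB portfolio.

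The only potential subtlety is the non-degeneracy required to divide by $B^\dagger$, but this is guaranteed by the standing assumption that the risk contributions $B_i$ are strictly positive. I do not foresee any genuine obstacle; the content of the proposition is really just a repackaging of the Euler allocation identity, and the argument is symmetric in the two directions.
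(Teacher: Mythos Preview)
Your proposal is correct and follows essentially the same argument as the paper: both directions rely on the Euler identity of Lemma~\ref{lemma: Euler-allocation}, and the converse is obtained by summing \eqref{eq:RB-cross-condition} over $j$. If anything, you are slightly more explicit than the paper in justifying the final division by $B^\dagger$ via the standing assumption $B_i > 0$.
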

	
\begin{proof}
Let $\vv$ be the exposure of a RB portfolio.
Then, $\vv$ fulfils $1 = \frac{B_i}{\RC_i(\vv)}$ for all $i = 1, \ldots, d$, and thus \eqref{eq:RB-cross-condition}. $B^\dagger = \rho(\vv)$ holds by Euler's Theorem.
Conversely, let $\vv$ satisfying \eqref{eq:RB-cross-condition} and \eqref{eq:RB-total-condition}, then, 
summing Equation \eqref{eq:RB-cross-condition} over $j$, and applying Euler's Theorem,
we have for all $i$
\begin{align*}
    B_i \, \rho(L(\vv))
	=\RC_i(\vv)B^\dagger
	=\RC_i(\vv)\,\rho(L(\vv))\,
\end{align*}
and $\vv$ is a RB portfolio.
\end{proof}
	
The formulation of a RB portfolio as an optimisation problem, which we present in the next proposition, was first proposed by \cite{Maillard2010JPM}.
	
\begin{theorem}
\label{prop:RB-opim}
For a coherent risk measure and a risk budget $\B$,
denote by $\vvstar$ any optimal solution  to
\begin{align}
	\label{eqn:RB-optimisation1}
	\min_{\vv} \rho(L(\vv))\, , \qquad \text{subject to} \quad & \sum_{i = 1}^d B_i \log(v_i) \ge 0\,.
\end{align}
Then $\vv: = \frac{B^\dagger}{\rho(L(\vvstar))}\, \vvstar$ is a RB portfolio with risk budget $\B$.
\end{theorem}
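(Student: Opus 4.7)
The plan is to recognise \eqref{eqn:RB-optimisation1} as a convex program and apply the Karush--Kuhn--Tucker conditions. Coherence of $\rho$ gives convexity and positive homogeneity of $\vv \mapsto \rho(L(\vv))$, while the feasible set $\{\vv > 0 : \sum_i B_i \log v_i \ge 0\}$ is convex since $\log$ is concave and the $B_i$ are positive. Slater's condition is immediate (take any $\vv$ with all $v_i$ sufficiently large), so a feasible $\vvstar$ is optimal if and only if it satisfies the KKT conditions. Differentiability of $\rho(L(\cdot))$ under the continuous-distribution assumption of Section~\ref{sec:MRC} lets me write the stationarity equation as
\begin{equation*}
\MR_i(\vvstar) \;=\; \lambda \, \frac{B_i}{v_i^{\,*}}\,, \qquad i = 1, \ldots, d,
\end{equation*}
with some multiplier $\lambda \ge 0$. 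Multiplying by $v_i^{\,*}$ converts this into the crucial identity $\RC_i(\vvstar) = \lambda B_i$ for every $i$.

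To pin down $\lambda$ I would sum over $i$ and invoke Euler's formula from Lemma~\ref{lemma: Euler-allocation}, giving $\rho(L(\vvstar)) = \lambda \sum_i B_i = \lambda B^\dagger$, hence $\lambda = \rho(L(\vvstar))/B^\dagger$. Setting $c := B^\dagger/\rho(L(\vvstar))$ and using positive homogeneity of $\rho(L(\cdot))$ together with degree-zero homogeneity of $\MR_i$, the rescaled exposure $\vv := c\,\vvstar$ satisfies $\rho(L(\vv)) = c\,\rho(L(\vvstar)) = B^\dagger$ and
\begin{equation*}
\RC_i(\vv) \;=\; c\, v_i^{\,*}\, \MR_i(\vvstar) \;=\; c\,\RC_i(\vvstar) \;=\; c \lambda B_i \;=\; B_i.
\end{equation*}
Both conditions \eqref{eq:RB-cross-condition} and \eqref{eq:RB-total-condition} of Proposition~\ref{prop:RB-equivalent} are thus verified, so $\vv$ is an RB portfolio with risk budget $\B$.

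The main subtleties I would handle carefully are: (i) activity of the log constraint at the optimum, otherwise $\lambda = 0$ collapses the argument. This follows by a homogeneity-scaling perturbation: if $\sum_i B_i \log v_i^{\,*} > 0$ strictly, then shrinking $\vvstar$ by a factor slightly below $1$ keeps feasibility and decreases $\rho(L(\vvstar))$ (whenever it is positive), contradicting optimality. (ii) Strict positivity $\rho(L(\vvstar)) > 0$, needed to divide by it; this holds because the log constraint precludes the degenerate $\vv = 0$ and $\rho$ is assumed non-degenerate on non-trivial long positions. (iii) The fact that, in \eqref{eqn:RB-optimisation1}, the minimisation is implicitly over $\vv > 0$ so that $\log v_i$ is well-defined; this is also why $\vvstar$ has all strictly positive components, making the rescaling $c\vvstar$ legitimate as a long-only portfolio.
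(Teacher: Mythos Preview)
Your proof is correct and follows essentially the same route as the paper: form the Lagrangian, read off the first-order condition $\RC_i(\vvstar) = \lambda B_i$, then rescale by $B^\dagger/\rho(L(\vvstar))$ using positive homogeneity to hit the target total risk. If anything, you are more careful than the paper in justifying the KKT machinery (Slater) and in flagging the activity of the log constraint and the sign of $\rho(L(\vvstar))$, points the paper treats only implicitly or in the remark following the theorem.
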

The constraint $\sum_{i = 1}^d B_i \log(v_i) \ge 0$ implicitly forces the $v_i$'s to be strictly positive so that the logarithm is well-defined.
This is consistent with $\log(y)$ being a concave function
taking values $-\infty$ on $y \leq 0$,
and keeps optimisation problem \eqref{eqn:RB-optimisation1} convex.

\begin{proof} 
Define the Lagrangian
\begin{equation*}
J(\vv, \lambda):= \rho(L(\vv)) - \lambda \sum_{i = 1}^d B_i \log(v_i)
\, ,
\end{equation*}
with Lagrange multiplier $\lambda \ge 0$. Then, for $i \in \{1, \ldots d\} $, taking derivatives with respect to $v_i$ and imposing the first order condition yields
\begin{align}
\label{eq:lambda}
\lambda = \frac{1}{B_i} \vstar_i \MR_i(\vvstar)
= \frac{1}{B_i} \RC_i(\vvstar)\, .
\end{align}
Hence, the optimal $\vvstar > 0$ fulfils \eqref{eq:RB-cross-condition}.
By homogeneity, we may define $\vv := \frac{B^\dagger}{\rho(L(\vvstar)}\vvstar$ so that the exposures $\vv$ satisfies $ \rho(L(\vv)) = B^\dagger$.
By Proposition~\ref{prop:RB-equivalent}, this yields a RB portfolio.

Conversely, let $v_i$  be strictly positive exposures satisfying \eqref{eq:RB-cross-condition}.
Scaling $v_i$ so that $x_i := C v_i$ satisfy $\sum_i B_i \log(x_i) = 0$, yields a vector $\x$ at the boundary of the feasible region.
By homogeneity, it holds
that all $\frac{\RC_i(\x)}{B_i}$ are equal (scaled up by $C$),
so that we can define a positive $\lambda$ via Equation~\eqref{eq:lambda}.
Given such $\lambda$, the $x_i$ are feasible,
satisfy the first order conditions,
and complementarity, and are therefore
optimal for problem~\eqref{eqn:RB-optimisation1}.
\end{proof}

Due to homogeneity, problem~\eqref{eqn:RB-optimisation1}
may have no optimal solution.
This happens if there are exposures $v_i$ such that
$\rho(L(\vv)) < 0$ and $\sum_{i=1}^d B_i \log(v_i) \geq 0$.
Then, scaling $\vv$ by a factor $M > 1$
will decrease the objective by a factor $M$,
while keeping feasibility since the left-hand side of the constraint becomes $\sum_{i=1}^d  B_i (\log(v_i) + \log (M) )= B^\dagger \log (M) + \sum_{i=1}^d  B_i \log(v_i) \ge 0$.
In particular, this is the case if there is one asset $j$ whose risk
is negative, so one can set $v_j$ arbitrarily large
and all others to one (so that their logarithm is zero).

\begin{prop}
If there exists a RB portfolio for a coherent risk measure $\rho$ and a risk budget $\B$, it is unique.
\end{prop}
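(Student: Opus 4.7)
The plan is to exhibit any RB portfolio as the unique critical point of a strictly convex auxiliary function on the positive orthant. Concretely, I would introduce
\begin{equation*}
H(\vv) := \rho(L(\vv)) - \sum_{i=1}^d B_i \log v_i, \qquad \vv \in \R^d, \; \vv > 0,
\end{equation*}
which is exactly the Lagrangian $J(\cdot,\lambda)$ from the proof of Theorem~\ref{prop:RB-opim} with the multiplier frozen at $\lambda = 1$. The reduction to this single function is the main conceptual step; everything afterwards is routine convex analysis.

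First I would check that $H$ is strictly convex on its domain. The map $\vv \mapsto \rho(L(\vv))$ is convex because the loss $L(\vv)$ is linear in $\vv$ and a coherent risk measure is convex on $\Lp$. Each summand $-B_i \log v_i$ is strictly convex on $(0,\infty)$ since $B_i > 0$, so $-\sum_i B_i \log v_i$ is strictly convex, and strict convexity is preserved under addition of a convex function.

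Next I would show that any RB portfolio is a critical point of $H$. Differentiability of $\vv \mapsto \rho(L(\vv))$ is guaranteed by the continuity assumption on the joint loss distribution that the paper has already flagged; combining this with the defining relation $\RC_i(\vv) = v_i \MR_i(\vv) = B_i$ gives, componentwise,
\begin{equation*}
\partial_i H(\vv) = \MR_i(\vv) - \frac{B_i}{v_i} = \frac{\RC_i(\vv) - B_i}{v_i} = 0.
\end{equation*}

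Finally, because $H$ is convex and differentiable on an open convex set, every critical point is a global minimizer, and strict convexity forces that minimizer to be unique. Two RB portfolios would therefore have to coincide. The only subtle ingredient is differentiability of $\rho(L(\cdot))$, already dealt with by the standing continuity assumption on the loss distribution; modulo this, the argument is essentially a one-line consequence of strict convexity, so I do not expect any real obstacle.
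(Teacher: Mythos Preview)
Your argument is correct, and it takes a genuinely different route from the paper's. The paper argues at the level of the \emph{constrained} problem~\eqref{eqn:RB-optimisation1}: it takes two optimal solutions, uses strict concavity of $\sum_i B_i \log v_i$ to push their convex combination strictly into the interior of the feasible set, and then rescales back to the boundary using positive homogeneity of $\rho\circ L$ to produce a strictly better feasible point, contradicting optimality. Uniqueness of the RB portfolio then follows via the correspondence established in Theorem~\ref{prop:RB-opim}. You instead work with the \emph{unconstrained} penalised objective $H(\vv)=\rho(L(\vv))-\sum_i B_i\log v_i$ (the Lagrangian with multiplier frozen at $\lambda=1$), observe that it is strictly convex, and verify by direct differentiation that any RB portfolio is a critical point---hence the unique global minimiser. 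Your route is arguably cleaner: it ties the RB portfolio directly to $H$ without passing through the constrained problem or invoking the scaling argument, and the choice $\lambda=1$ is exactly right because the defining equations $\RC_i(\vv)=B_i$ force the Lagrange multiplier in~\eqref{eq:lambda} to equal one at the RB exposure itself (as opposed to the unscaled optimiser $\vvstar$). The paper's approach, on the other hand, simultaneously delivers uniqueness of the optimiser of~\eqref{eqn:RB-optimisation1}, which is the object the algorithms in Section~\ref{sec:algorithms} actually compute.
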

\begin{proof}
Let $\x$ and $\z$ be two optimal solutions to~\eqref{eqn:RB-optimisation1}.
Since the logarithm is strictly convex, any convex combination of $\x$ and $\z$
yields a point strictly in the interior of the feasible set.
Then, the convex combination can be scaled back to the boundary by multiplying
all entries by a factor strictly smaller than one,
which yields a solution that has strictly smaller risk; 
contradicting optimality.
\end{proof}

From a practical point of view, it may be easier to provide a \textit{proportional budget}
of the portfolio loss instead of the nominal budget, i.e, $\bb = (b_1, \ldots, b_d)$ with $\sum_{i = 1}^d b_i = 1$.
For a fixed risk appetite $B^\dagger$, the corresponding proportional risk budget $\bb$ is given by $b_i = \frac{B_i}{B^\dagger}$.
In the same vein, one might be more interested in the \emph{weights} $\w$
of the assets inside the portfolio rather than their nominal values.
For this, one calculates the weights for the desired RB portfolio
by simply normalising the solution of~(\ref{eqn:RB-optimisation1})
so that it sums to one.

If an investor has an initial endowment $c v_0$,  $c > 0$,
and aims to find a RB portfolio with risk budget $c\B$, then the optimal RB weights are equal to those of a RB portfolio with endowment $v_0$ and risk budget $\B$. Therefore, in the algorithms presented in Section \ref{sec:algorithms}, we assume, without loss of generality, that $v_0=1$.

\section{Algorithms} \label{sec:algorithms}

As problem~\eqref{eqn:RB-optimisation1} is convex,
one might use convex optimisation algorithms to solve it.
Since our aim is to consider the cases when the returns have complex distributions,
in which the portfolio's risk $\rho(L(\vv))$
does not posses a simple expression as a function of $\vv$,
we present two main algorithms in this paper.
The first builds an approximation of the portfolio risk
by evaluating the returns in a (large) number of \emph{scenarios},
which is similar to a Sample Average Approximation (SAA) for risk-neutral problems.
The resulting problem is then solved via a two-step procedure,
building an approximation of the risk measure,
as a function of the investment $\vv$ using cutting planes. In the special case of ES,
one can first obtain an equivalent formulation for problem~\eqref{eqn:RB-optimisation1}
and only then apply cutting planes.
The second algorithm uses stochastic gradient descent,
by sampling scenarios and evaluating a stochastic gradient.
Both approaches are implemented in the companion Julia package \texttt{RiskBudgeting} \citep{RiskBudgeting.jl}
which is used in Sections~\ref{sec:numerical_studies} and~\ref{sec:application} to generate the risk budgeting portfolios.

After one has obtained an optimal solution $\vvstar$
to problem~\eqref{eqn:RB-optimisation1},
it can be normalised to either realise the desired total investment $v_0$,
the total risk appetite $B^\dagger$,
or to portfolio weights $\w$ that sum to one.

\subsection{Sample Average Approximation and Cutting Planes}
The idea for Sample Average Approximation (SAA) -- see \cite{prekopa1995stochastic} -- is to choose $N$ scenarios
of the realisations of the underlying random variables
and replace the expectation (or, in our case, the risk measure)
by its value in the sample.
In our problem, this amounts to sampling $N$ vectors
of (the joint distribution of) the prices $\Pr$ of the assets.
This effectively discretises the random variable $L(\vv)$
into a vector of $N$ realisations of the losses,
which we denote by $\ell^{(j)}(\vv)$ for $j = 1, \ldots, N$.
Thus, we obtain a discrete distribution
for the random variables, so one might wonder about differentiability
of the risk measure implied by the definition of the RB portfolio.
It should be noted that we use SAA to find an approximate solution
to optimisation problem~\eqref{eqn:RB-optimisation1},
and therefore we only need to approximate the risk measure.

A concrete implementation of an algorithm for solving optimisation problem~\eqref{eqn:RB-optimisation1}
depends on the chosen coherent risk measure $\rho$.
Given the scenarios of the prices and an exposure $\vv$,
the evaluation of $\rho(L(\vv))$
also yields a set of ``change-of-probabilities'' $\zeta$,
such that
$\rho(L(\vv)) = \E_\zeta[L(\vv)]$.
This shows that $\zeta$ is a subgradient of $\rho(\cdot)$ at $L(\vv)$~\citep{shapiro2009lectures},
and, using the chain rule, one may obtain the subgradient of $\rho \circ L$ at $\vv$.

\subsection{Cutting-planes for Expected Shortfall}

In the case of the ES,
one has a further possibility, since it has a compact formulation
as an optimisation problem itself.
For a given random variable $Z \in \mathbb{L}^2$,
the ES at level $\alpha \in [0, 1)$
has representation~\citep{Rockafellar-Uryasev}
\begin{equation}
  \label{eq:es_RU}
  \ES_\alpha[Z] = \min_t \  t + \frac{1}{1-\alpha} \E[(Z - t)_+]\,,
\end{equation}
where $(x)_+ = \max\{x, 0\}$ denotes the positive part. Other risk measures which admit a representation similar to \eqref{eq:es_RU}, where the risk measure can be written as an infimum include the EVaR \cite[Theorem 12]{Dentcheva2010Renyi}, which we will consider in the sequel, and higher moment risk measures \citep{Krokhmal2007higher} .

Substituting the representation of ES in problem~\eqref{eqn:RB-optimisation1},
with $L(\vv)$ in place of $Z$, we obtain an alternative optimisation problem for the risk budgeting portfolio 
\begin{subequations}
  \label{eq:rb_es_RU}
  \begin{align}
    \min_{\vv,t}  \quad &\; t + \frac{1}{1-\alpha} \E[(L(\vv) - t)_+] \\[.5ex]
     \textrm{s.t.}\quad & \sum_{i=1}^d B_i \log v_i \geq 0 .
\end{align}
\end{subequations}
Note that for fixed $\vv$ (which does not need to sum to one),
the infimum in $t$ is attained at the $\VaR_\alpha$ of the portfolio losses $L(\vv)$.

In an SAA setting and given $N$ samples for the portfolio loss, we can replace the expectation in equation \eqref{eq:rb_es_RU} by its sample average
\[
  \frac{1}{N} \sum_{j=1}^N (\ell^{(j)}(\vv) - t)_+.
\]
For a small number of scenarios $N$, and number $d$ of assets,
this discrete version of problem~\eqref{eq:rb_es_RU} can be solved directly.
However, for a relatively large number of scenarios, which is required
to handle the relevant cases when $\alpha$ is close to one,
one may again split the problem in two stages,
obtaining a decomposition amenable to cutting planes.
We keep the variables $\vv$ and~$t$ in the first stage problem, 
which becomes:
\begin{subequations}
  \label{eq:rb_es_1stage}
\begin{align}
    \min_{\vv,t}\quad   & t + Q(\vv, t) \\[.5ex]
    \textrm{s.t.}  \ & \sum_{i=1}^d B_i \log v_i \geq 0,
\end{align}
\end{subequations}
where the scenario average is replaced by the value function $Q(\vv, t)$
that will be evaluated in the second stage.
The corresponding value function $Q(\vv, t) = \frac{1}{N(1-\alpha)} \sum_{j=1}^N (\ell^{(j)}(\vv) - t)_+$
is convex, but not differentiable in $(\vv,t)$.
Therefore, problem~\eqref{eq:rb_es_1stage} can be solved
using cutting planes to approximate the function $Q$ from below.
This is done as follows:
\begin{enumerate}
  \item add an extra optimisation variable $z$ to~\eqref{eq:rb_es_1stage};
  \item add the constraint $z \geq Q(\vv, t)$;
  \item replace $Q(\vv, t)$ by $z$ in the objective function.
\end{enumerate}
The idea for using cutting planes is to approximate the constraint
$z \geq Q(\vv, t)$ by a family of linear inequalities of the form
\[
  Q(\vv, t) \geq c + \sum_{i=1}^d \pi_i v_i + \pi_t t,
\]
that are valid for $Q$, and use them for $z$ instead.
One standard way of generating such inequalities is to use the
\emph{subgradient inequalities} for convex functions. The subgradient inequalities of the value function $Q$ are
\[
  Q(\vv, t) \geq Q(\vv_0, t_0) + g^\top \left((\vv,t) - (\vv_0, t_0)\right),
\]
where $g$ is a subgradient of $Q$ at $(\vv_0, t_0)$.

In the context of our risk-parity portfolio,
we start from the optimisation problem
\begin{subequations}
\label{eq:rp_es_cp}
 \begin{align}
    \argmin_{\vv,t,z}  \quad & t + z \\[.5ex]
    \textrm{s.t.}    \ & \sum_{i=1}^n B_i \log v_i \geq 0,
\end{align}
\end{subequations}
and alternatively add constraints and optimise,
as described in Algorithm~\ref{alg:ES_cp}.
Since the point $\vv = (1, \ldots, 1); \ t = 0$
is feasible for every choice of $B_i$,
the loop can always be started from it.
\begin{algorithm}[t]
	\KwData{Budgets $B_i$ (in dollars)}
	\KwData{Tolerance $\varepsilon$}
	Set $v_i = 1$, $t = 0$  \tcp*{initial guess for allocation and VaR}
    Set $k = 0$, $z_{LB} = -\infty$ \tcp*{iteration counter \& lower bound}
    
	\While{True}
	{
		\tcc{Second stage: evaluate $Q$ and its subgradient at $(\vv, t)$:}
		$q = \frac{1}{N} \sum_{j=1}^N Q(\vv, t; \xi^{(j)})$\;
		$g = \frac{1}{N} \sum_{j=1}^N \partial Q(\vv, t; \xi^{(j)})$\;
		
		\uIf(\tcp*[f]{Found optimal solution}){$q - z_{LB} < \varepsilon$}
		{
            Normalise $\w = \vv/\textrm{sum}(\vv)$\;
            \Return{$(\w, t^*)$}
		}

        \tcc{First stage: refine approximation and re-evaluate}
		Add inequality $z \geq q + g^\top \left((\vv,t) - (\vv_0, t_0)\right)$
		to~\eqref{eq:rp_es_cp}
		
		Solve the first stage~\eqref{eq:rp_es_cp} for $(\vv, t, z)$,
		save optimal values $\vvstar$, $t^*$ and $z^*$\;

        $\vv := \vvstar$; $t := t^*$ \tcp*{Update trial point}
        $z_{LB} = z^*$ \tcp*{and lower bound}
	}
	
	\caption{Cutting Planes for Risk Budgeting portfolios under ES}
	\label{alg:ES_cp}
\end{algorithm}

At each iteration in Algorithm \ref{alg:ES_cp}, the value of the solution to the first stage optimisation, $z^*$,
corresponds to an under-approximation of $Q(\vv, t)$,
since we use only linear subgradient inequalities
and further minimise over $z$.
Thus, when evaluating $Q$ in the second stage in the next iteration,
we always obtain a larger value than $z^*$.
Finally, when the values of $z^*$ and the evaluation of $Q$ in the second stage coincide, then the first-stage problem
has evaluated $Q$ \emph{exactly} at the current point, and $z^*$ is the optimal solution; since, for all other points,
the first-stage problem underestimates $Q$.

The value function $Q(\vv,t)$ itself can be decomposed \emph{in scenarios}.
For $j \in \{1, \ldots, N\}$, denote by $\Pr^{(j)}$ a realisation of the random variable $\Pr$, and by
$\ell^{(j)}(\vv) = \xi^{{(j)}^\top} \vv$, where $\xi^{(j)} = \left(\frac{\p - \Pr^{(j)}}{\p}\right)$, a realisation of the loss.
Then the value function becomes
\begin{equation}
  \label{eq:ES_scen}
  Q(\vv, t) = \frac{1}{N} \sum_{j=1}^N \frac{({\xi^{(j)}}^\top \vv - t)_+}{1 - \alpha}
           = \frac{1}{N} \sum_{j=1}^N Q(\vv, t; \xi^{(j)})
\end{equation}
so that $Q(\vv, t; \xi^{(j)})$ can be evaluated \emph{separately} for each scenario $\xi^{(j)}$.
The same holds for the subgradients,
which for each scenario $j$ are given by
\begin{equation}
  \label{eq:ES_subgrad}
  \partial Q(\vv, t; \xi^{(j)}) =
  \begin{cases}
    (\xi^{(j)}, -1)/(1 - \alpha) & \text{if $Q(\vv, t; \xi^{(j)}) > 0$}, \\
    0                           & \text{otherwise}.
  \end{cases}
\end{equation}

\medskip

Note that this approach provides only an estimate of the portfolio risk
on the objective function.
Therefore, when using the true distribution
for evaluating the actual risk of the portfolio,
the result will differ due to statistical error.
The statistical error can be reduced by increasing the number of scenarios to evaluate $Q$ in~\eqref{eq:ES_scen}.

We add two  remarks regarding the numerical implementation.
First, cutting planes algorithms usually require a compact domain.
In our setting, since the RP inequality $\sum_i B_i \log(v_i) \geq 0$
allows $\vv$ to be unbounded, we also include (somewhat arbitrary) bounds
$v_i \leq M$, with some large $M>0$, for the portfolio composition in problem~\eqref{eq:rp_es_cp},
and check that none of these constraints is active in the returned solution.
Second, for the ES risk measure
one may include in the optimisation problem~\eqref{eq:rp_es_cp} the inequality constraint $z \geq 0$.
This constraint helps stabilise the optimisation problem
since otherwise it could diverge to very negative values of $z$ and $t$
during the initial iterations.
Further details can be checked in our implementation in~\cite{RiskBudgeting.jl}.

\subsection{Cutting planes for coherent risk measures}
\begin{algorithm}[t]
	\KwData{Budgets $B_i$ (in dollars)}
	\KwData{Tolerance $\varepsilon$}
    \KwData{Risk measure $\rho$}
	Set $\vv^0 = (1, \ldots, 1)$   \tcp*{initial guess for allocation}
    Set $k = 0$, $z^0 = -\infty$   \tcp*{iteration counter \& lower bound}
	\While{True}
	{
		\tcc{Evaluate the risk and its subgradient at $\vv^k$:}
		$q = \rho(L(\vv^k))$\;
		$g = \E_\zeta[\partial L(\vv^k)]$\;
		\uIf(\tcp*[f]{Found optimal solution}){$q - z^k < \varepsilon$}
		{
            Normalise $\vvstar = \vv^k/\textrm{sum}(\vv^k)$ \;
            \Return{$\vvstar$} \;
		}
		
		\tcc{Refine approximation and re-evaluate}
		Add inequality $z \geq q + g^\top (\vv - \vv^k)$
		to~\eqref{eqn:RB-stage1}

        Increment $k$\;
        
		Solve~\eqref{eqn:RB-stage1} for $(\vv, z)$,
		save optimal values $\vv^k$ and $z^k$
		
	}
	
	\caption{General Cutting Planes algorithm for Risk Budgeting portfolios}
	\label{alg:cp_general}
\end{algorithm}
A similar approach can be employed for a cutting-planes algorithm
that handles an arbitrary coherent risk measure $\rho$.
Since we do not assume a particular form,
such as the one we used in equation~\eqref{eq:es_RU},
we can only rewrite optimisation problem~\eqref{eqn:RB-optimisation1} as
\begin{equation}	\label{eqn:RB-stage1}
	\min_{\vv, z} \ z \,, \qquad \text{subject to} \quad \sum_{i = 1}^d B_i \log(v_i) \ge 0\,.
\end{equation}
where the extra variable $z$
will approximate the entire objective function $\rho(L(\vv))$.
Notice that, in particular, we do not obtain a scenario decomposition
as was obtained in equation~\eqref{eq:ES_scen}.

At each iteration $k$, we evaluate the objective function at $\vv^k$,
and obtain its subgradient $g^k$ at $\vv^k$.
The subgradient inequalities now translate to
\[
    z \geq \rho(L(\vv^k)) + (g^k)^\top (\vv - \vv^k)
\]
which are added to~\eqref{eqn:RB-stage1}.
By repeatedly adding more constraints,
we eventually get a good approximation for the objective function
near the optimal solution $\vv^*$, see Algorithm \ref{alg:cp_general}.

\subsection{Stochastic Gradient for Expected Shortfall}
Instead of discretising the scenarios \textit{a priori},
one could start from the two-stage decomposition in~\eqref{eq:rb_es_1stage}
but where $Q(\vv, t) = \frac{1}{1-\alpha} \E[(L(\vv) - t)_+]$
is the true expectation.
Then, it is easy to obtain a \emph{stochastic gradient} for $Q$,
given a sample of the random vector $\Pr$.
Since the objective function is $t + Q(\vv,t)$,
the actual gradient must be shifted to account for the term $t$.
In this manner, we obtain Algorithm~\ref{alg:ES_sgd},
which is a version of projected stochastic gradient descent (SGD) -- see \cite{prekopa1995stochastic}.

\begin{algorithm}[t]
	\KwData{Budgets $B_i$ (in dollars)}
	\KwData{Number of iterations $N$}
	\KwData{Steps $\alpha_k$}
	Set $v_i = 1$, $t = 0$  \tcp*{initial guess for allocation and VaR}
	Set $k = 1$             \tcp*{initial iteration number}
	\While{$k \leq N$}
	{
		Sample a scenario $\xi$ from the distribution of $P$;
		
		$g = \partial Q(\vv, t; \xi)$
		\tcp*{Subgradient of $Q$ from equation~\eqref{eq:ES_subgrad}}
		
        $(\vv,t) \mathrel{-}= \alpha_k [g + (\mathbf{0},1)]$
		\tcp*{Update current iterate}
		
		Project $\vv$ into feasible set $\sum_i B_i \log(v_i) \geq 0$.
		\tcp*{Recover feasibility}
		
        $k \mathrel{+}= 1$
		\tcp*{Update iteration count}
	}
	
	Return the average of all feasible points produced during the algorithm.
	\caption{SGD for Risk-Parity under ES}
	\label{alg:ES_sgd}
\end{algorithm}

This scenario decomposition available for the ES risk measure is due to
the Rockafellar-Uryasev formulation of the ES in Equation~\eqref{eq:es_RU}.
We remark that this decomposition is useful in the SGD setting, where we sample a different scenario at each iteration, and also in the SAA setting, in that it
allows for a large number of scenarios to be evaluated independently.
We further observe that this decomposition was possible after the introduction
of the auxiliary variable $t$ in the optimisation problem.

If one is interested in applying SGD to a different risk measure than ES,
it would be necessary
to obtain a reformulation that allows such a decomposition.
Simple examples of such risk measures are
(finite, convex) combinations of $\ES_\alpha$ at different $\alpha$ levels,
higher moment risk measures and EVaR.

\section{Numerical studies} \label{sec:numerical_studies}
In this section we compare the performance of the cutting planes algorithm described in Section \ref{sec:algorithms} to standard convex optimisation solvers on a series of risk budgeting problems, i.e. their solutions to optimisation problem \eqref{eqn:RB-optimisation1}.

\subsection{Experiment setup}
In all numerical examples in this section, we solve the risk parity portfolio, i.e.,
$B_1 = \ldots = B_d$, for different asset dimension
$d \in \{2, 5, 10, 25, 50, 100\}$,
assuming an initial endowment of $v_0 = 1$. We compare the cutting planes algorithm,
either Algorithm~\ref{alg:ES_cp} or Algorithm~\ref{alg:cp_general} depending on the risk measure considered, denoted here by CP,
with three convex optimisation solvers,
IpOpt~\citep{ipopt}, SCS~\citep{ocpb:16} and, MOSEK~\citep{mosek}.
All solvers are called from Julia:
our cutting planes algorithm uses JuMP to build the first-stage problem~\eqref{eq:rp_es_cp} or~\eqref{eqn:RB-stage1} and to add cuts;
IpOpt utilises the JuMP~\citep{JuMP2017} interface,
and SCS and MOSEK rely on Convex.jl~\citep{convexjl}.
In order to compare run times, the precision in all solvers is set to $10^{-6}$. 
All algorithms use $N \in \{ 1\,000, 2\,000, 3\,000, 4\,000, 5\,000\}$ scenarios and solve the risk parity problem 10 times.
All results are averaged over these 10 runs.

\begin{figure}[b]%
	\centering
	\includegraphics[width = 0.8\textwidth]{./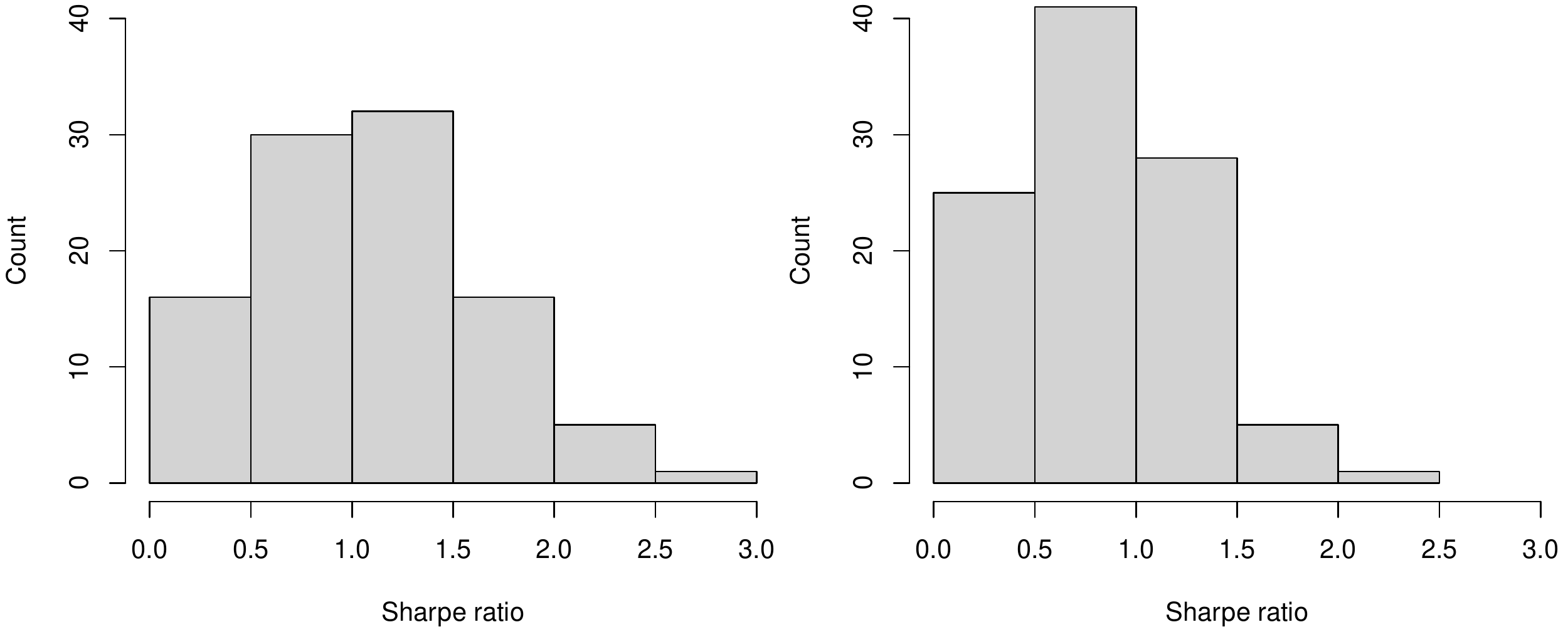}%
	\caption{Sharpe ratios for the multivariate Gaussian model (left) and the multivariate Student $t$ model (right)}%
	\label{fig:hist_sharpe}%
\end{figure}
The data generating process for the asset returns is assumed to be either multivariate Gaussian or multivariate Student $t$ with 5 degrees of freedom. This distributional assumption allows to calculate the risk contributions of the obtained portfolio weights in closed-form, thus not adding additional statistical uncertainty. For both the Gaussian and Student $t$, we generate a vector of returns $\Mu$ and a covariance matrix $\Sigma$ for the case when $d=100$ and subset its first components accordingly for the lower asset dimension cases. 
The expected returns, generated using Algorithm \ref{alg:mean_var_generation},
range from $0.7\%$ to $60\%$ and the volatilities (square root of the diagonal of $\Sigma$) vary between $10\%$ and $31\%$.
Note that even though the parameters are the same in the Gaussian and $t$ models, in the former the standard deviation for component $i$ is $\Sigma_{ii}$, while for the latter it is $\Sigma_{ii}\sqrt{\frac{\nu}{\nu-2}}$, where we set the degree of freedom to $\nu = 5$.
These values are such that the Sharpe ratio of each asset is between 0.05 and 2.7, as seen in Figure \ref{fig:hist_sharpe}.

\begin{algorithm}[t]
	\KwData{Asset dimension $d=100$}
	Set $\alpha_B$ = 1 ; 
	 $\beta_B = d^{0.4} \log(d)$ ; \\
	
	\For{$i,j = 1,\ldots, d$}
	{Sample $L_{i,j} \stackrel{iid}{\sim} Beta(\alpha_B, \beta_B)$;}
		
	Set $\Sigma = L L^T$ ; \\
	
	\For{$1,\ldots d$}
	{Sample $\displaystyle \frac{S_ i}{4} \stackrel{iid}{\sim} Beta(2,5) $\\
	Set $\Mu_i = \Sigma_{ii} \times S_i$ ;} 
	
	Return $\Mu$ and $\Sigma$
	\caption{Algorithm for generating $\Mu$ and $\Sigma$}
	\label{alg:mean_var_generation}
\end{algorithm}

\subsection{Expected Shortfall}
\label{sec:ES_rb_example}

First, we consider the performance of the cutting planes algorithm for the risk parity problem with ES and confidence levels $\alpha \in \{0.9, 0.95, 0.99 \}$.
Figures \ref{fig:cvar_dim_time_1} and \ref{fig:cvar_dim_time_4} present the log 10 of the average run time (over 10 different runs) as a function of the asset dimension $d$. The different panels correspond to increasing numbers of scenarios $N$.
All plots in Figure~\ref{fig:cvar_dim_time_1} are based on the multivariate Gaussian model with $\alpha = 0.90$, while the results in Figure~\ref{fig:cvar_dim_time_4} correspond to the multivariate Student $t$ model with $\alpha = 0.90$.
Results for the other values of $\alpha$, i.e., $\alpha = 0.95$ and $\alpha = 0.99$, are qualitatively equivalent and presented in Appendix \ref{sec:appendix_numerical_studies}. 
\begin{figure}[h]%
	\centering
	\includegraphics[width = 0.9\textwidth]{./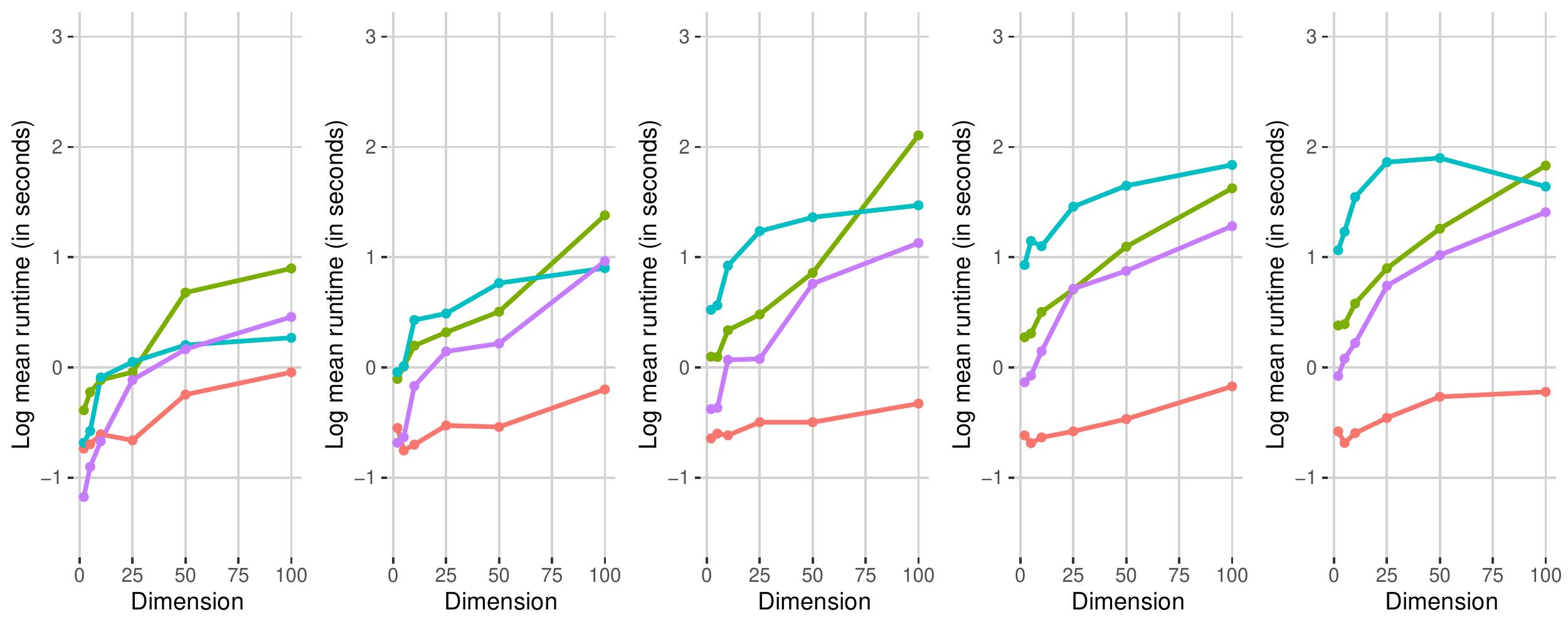}%
	\caption{Comparison of run times for different algorithms for the ES risk budgeting problem: \tikzcircle[colSolver1, fill=colSolver1]{2pt} Cutting Planes, \tikzcircle[colSolver2, fill=colSolver2]{2pt} IpOpt, \tikzcircle[colSolver3, fill=colSolver3]{2pt} MOSEK, \tikzcircle[colSolver4, fill=colSolver4]{2pt} SCS. From left to right, $N=1\,000, \ 2\,000, \ 3\,000, \ 4\,000, \ 5\,000.$ All plots are based on the multivariate Gaussian model with $\alpha = 0.90$.}%
	\label{fig:cvar_dim_time_1}%
\end{figure}
\begin{figure}[h!]%
	\centering
	\includegraphics[width = 0.9\textwidth]{./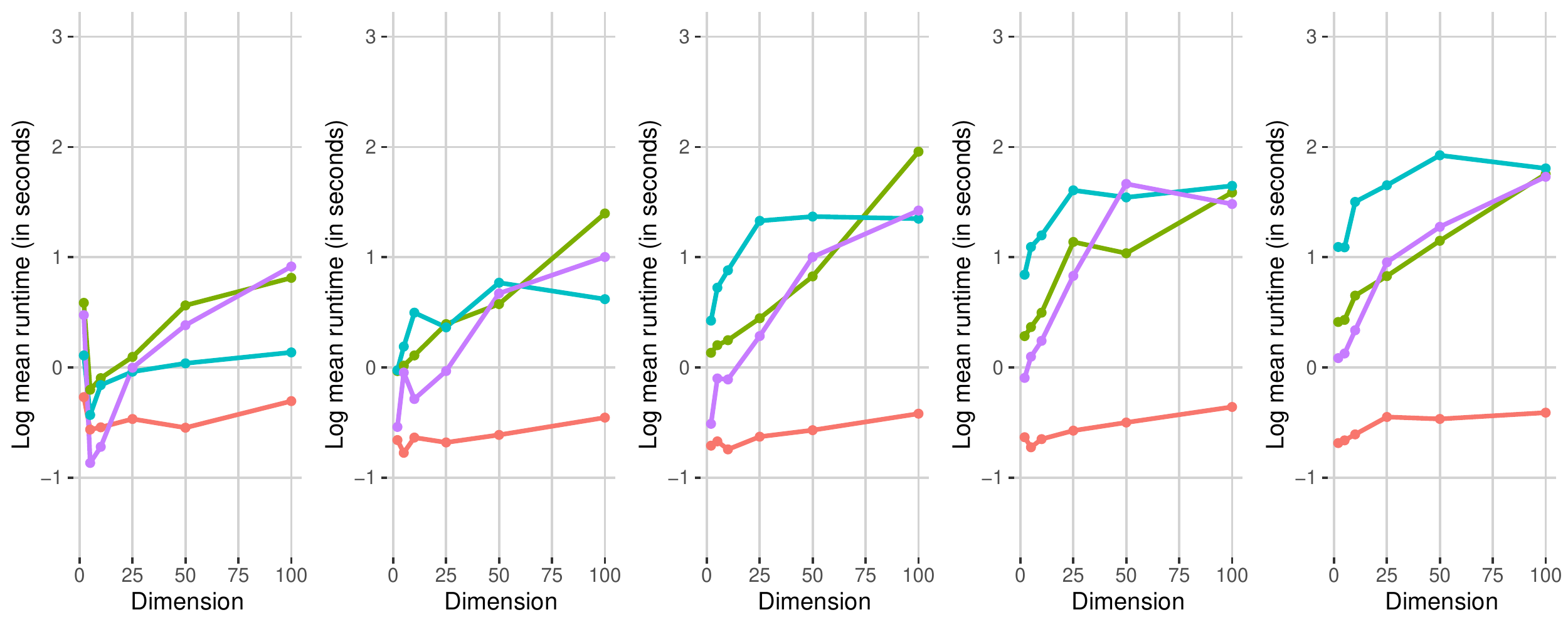}%
	\caption{Comparison of run times for different algorithms for the ES risk budgeting problem: \tikzcircle[colSolver1, fill=colSolver1]{2pt} Cutting Planes, \tikzcircle[colSolver2, fill=colSolver2]{2pt} IpOpt, \tikzcircle[colSolver3, fill=colSolver3]{2pt} MOSEK, \tikzcircle[colSolver4, fill=colSolver4]{2pt} SCS. From left to right, $N=1\,000, \ 2\,000, \ 3\,000, \ 4\,000, \ 5\,000.$ All plots are based on the multivariate Student $t$ model with $\alpha = 0.90$.}%
	\label{fig:cvar_dim_time_4}%
\end{figure}
From both run time plots, we see that the proposed cutting planes algorithm is consistently faster than all alternatives. SCS, which is faster than CP when both the asset dimension ($d$) and the number of scenarios ($N$) is small, presents run times that grow much faster than the other solvers when the asset dimension $d$ goes beyond 10. A similar behaviour is observed with IpOpt.

Differently from SCS and IpOpt, MOSEK appears to be much less sensitive to the asset dimension, exhibiting a much slower increase in run time, especially when the number of scenarios is large. However, MOSEK seams to be sensitive to increases in the number of scenarios. The proposed CP algorithm, however, is able to merge the best of all the contenders: it is both less sensitive to asset dimension and number of scenarios and moreover faster than MOSEK in all setups studied.

Overall, the simulation studies show that the proposed cutting planes algorithm is robust to the statistical model of the returns, to the number of assets in the portfolio, and to the number of scenarios, in the sense that the run time is mostly unaffected (it ranges from 0.17 seconds to 0.90 seconds in the examples of Figures \ref{fig:cvar_dim_time_1} and \ref{fig:cvar_dim_time_4}). The speed of the cutting planes algorithm allows us to study a realistic risk budgeting problem over a long time horizon in Section \ref{sec:application}.

\subsection{Entropic VaR and Distortion risk measures}
In this section we discuss the computational efficiency of the cutting planes algorithm for two different risk measures:
the Entropic Value-at-Risk (EVaR) and a distortion risk measure.
For the EVaR we use $\alpha = 0.95$,
and for the distortion risk measure we take $g(x) = \sqrt{x}$, for $x \in [0,1]$.

Table \ref{tbl:runtime_EVaR} reports the run time and the Entropic VaR risk contributions (computed as in Example \ref{ex:Gaussian-ES}) for the Gaussian distribution with $d=5$ and $N=1\,000$. We observe that SCS and IpOpt present serious problems with EVaR.
\begin{table}[ht]
\footnotesize
\centering
\begin{tabular}{lrrrrrr}
  \toprule\toprule
Solver & Run time & Asset 1 & Asset 2 & Asset 3 & Asset 4 & Asset 5\\ 
  \midrule
  CP & 1.55 & 0.109 & 0.0854 & 0.115 & 0.0667 & 0.121 \\ 
  SCS & 120.97 & 0.256 & $2.17 \times 10^{-5}$ & 0.125 & $1.36 \times 10^{-5}$ & 0.198 \\ 
  MOSEK & 1.43 & 0.256 & 0.000643 & 0.120 & 0.00176 & 0.199 \\ 
  IpOpt & 3382.49 & 0.117 & $2.03 \times 10^{-19}$ & 0.176 & $1.51 \times 10^{-19}$ & 0.298
   \\  \bottomrule\bottomrule
\end{tabular}
\caption{Run time (in seconds) and EVaR risk contributions of each one of the $d=5$ assets.}
\label{tbl:runtime_EVaR}
\end{table}
For this configuration, the IpOpt algorithm takes almost one hour and eventually stops at a solution that clearly does not satisfy the risk parity constraint, as the risk contributions are far from being the equal across assets. SCS finds a solution with similarly inaccurate risk contributions but in ``only" two minutes. The cutting planes algorithm and MOSEK present similar run times, but the CP converges to a slightly more balanced portfolio in this example, see Table \ref{tbl:runtime_EVaR}.
When the dimension of the problem is increased to $d=100$ and $N= 5\,000$, a single run of CP takes $4.76s$ while that of MOSEK takes $64.29s$.
After normalising the risk contributions to sum to one (which by homogeneity of the risk contributions is equivalent to normalising the weights), we observe that the risk contributions calculated by MOSEK are much less uniform than those by CP. This can be seen in Figure \ref{fig:entropicVaR_MOSEK_CP}, where MOSEK's solution contains one asset whose risk contributions is more than $40\%$ of the portfolio EVaR.
\begin{figure}[ht]%
	\centering
	\includegraphics[width = 0.8\textwidth]{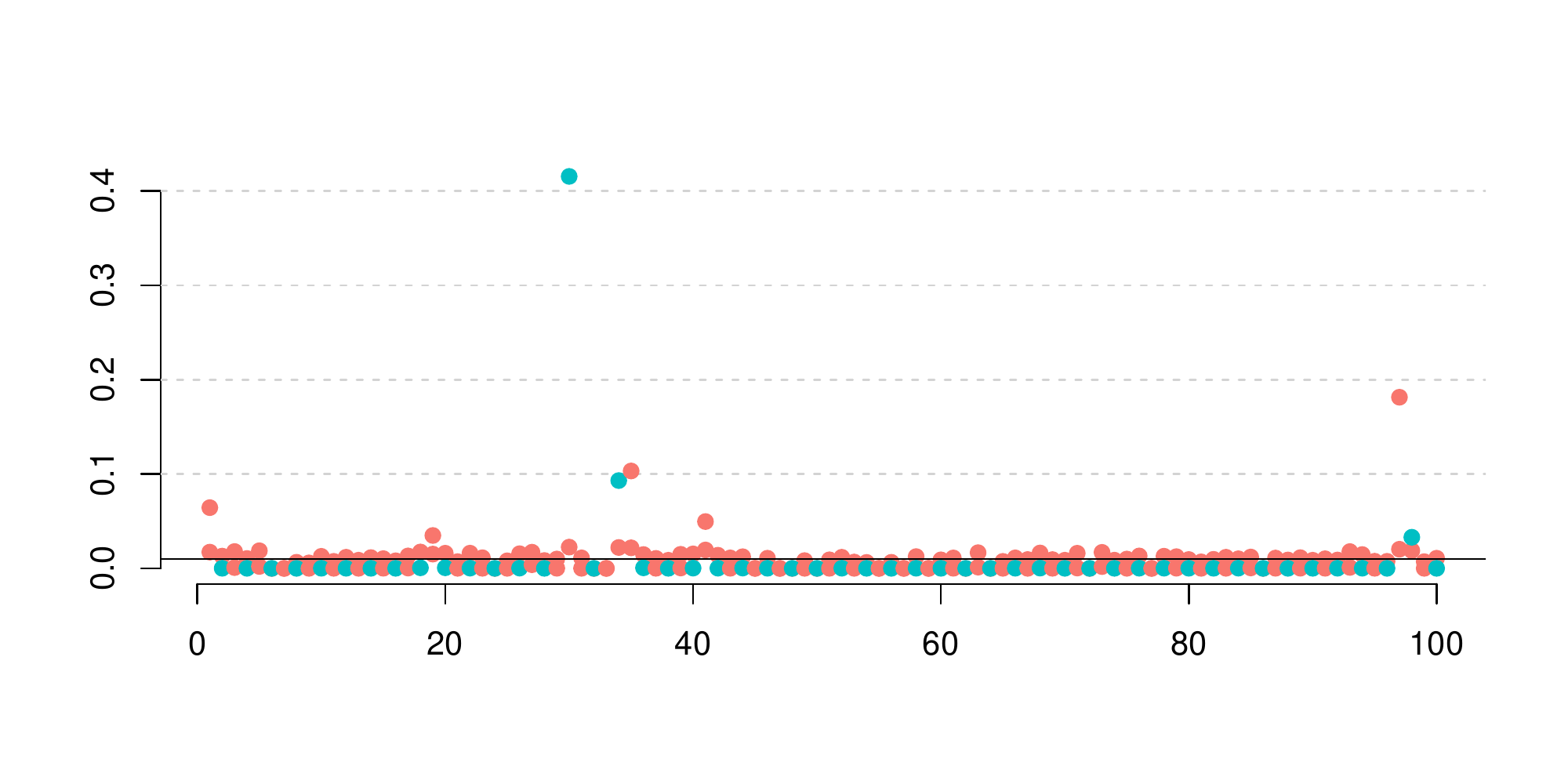}%
	\caption{Normalised risk contributions for EVaR and the Gaussian model with $d=100$ and $N=5\,000$ using \tikzcircle[colSolver1, fill=colSolver1]{2pt} Cutting Planes and \tikzcircle[colSolver3, fill=colSolver3]{2pt} MOSEK. Solid horizontal at 0.01.}%
	\label{fig:entropicVaR_MOSEK_CP}%
\end{figure}
We further show in Figure \ref{fig:entropicVaR_dim_time_FALSE} the average run times of the CP algorithm for the EVaR risk budgeting problem with multivariate Student $t$ returns. The run time results for EVaR are similar to those of ES in Section \ref{sec:ES_rb_example} in that even in the largest problem CP is able to converge (up to the required precision), on average, in less than 10 seconds. We also note that the CP algorithm is mostly unaffected by the number of scenarios $N$. 

\begin{figure}[ht]%
	\centering
	\includegraphics[width = 0.9\textwidth]{./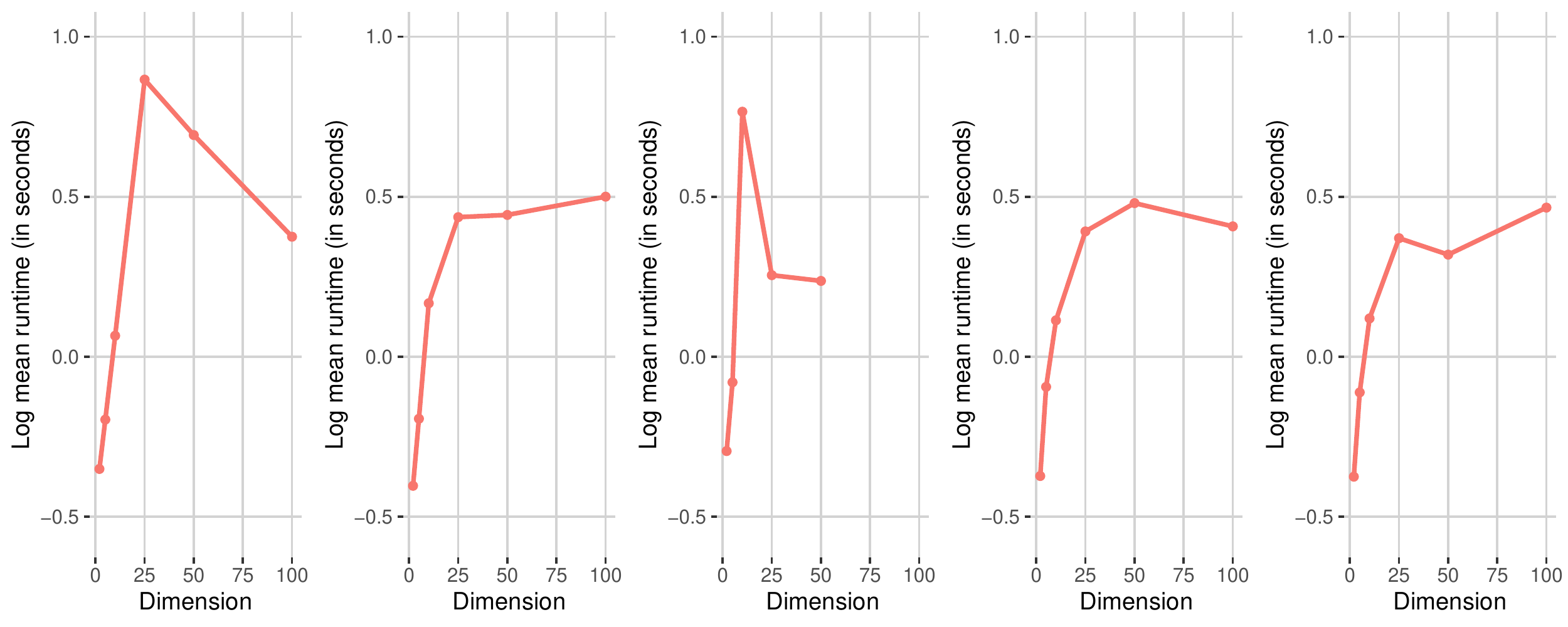}%
	\caption{Run times for the Cutting Planes algorithm for the EVaR risk budgeting problem.
    From left to right, $N=1\,000, \ 2\,000, \ 3\,000, \ 4\,000, \ 5\,000.$ All plots are based on the multivariate Student $t$ model.}%
	\label{fig:entropicVaR_dim_time_FALSE}%
\end{figure}


For the distortion risk measure with $g(x) = \sqrt{x}$, the cutting planes is the only viable solution across the proposed solvers. Even for the simplest problem of Gaussian returns, $d=2$, and $N=1\,000$, IpOpt, SCS and MOSEK run out of memory on the 32Gb computer used for the experiments. Therefore, only run times for the cutting planes algorithm are presented in Figure~\ref{fig:distortion_dim_time_TRUE}.
\begin{figure}[ht]%
	\centering
	\includegraphics[width = 0.9\textwidth]{./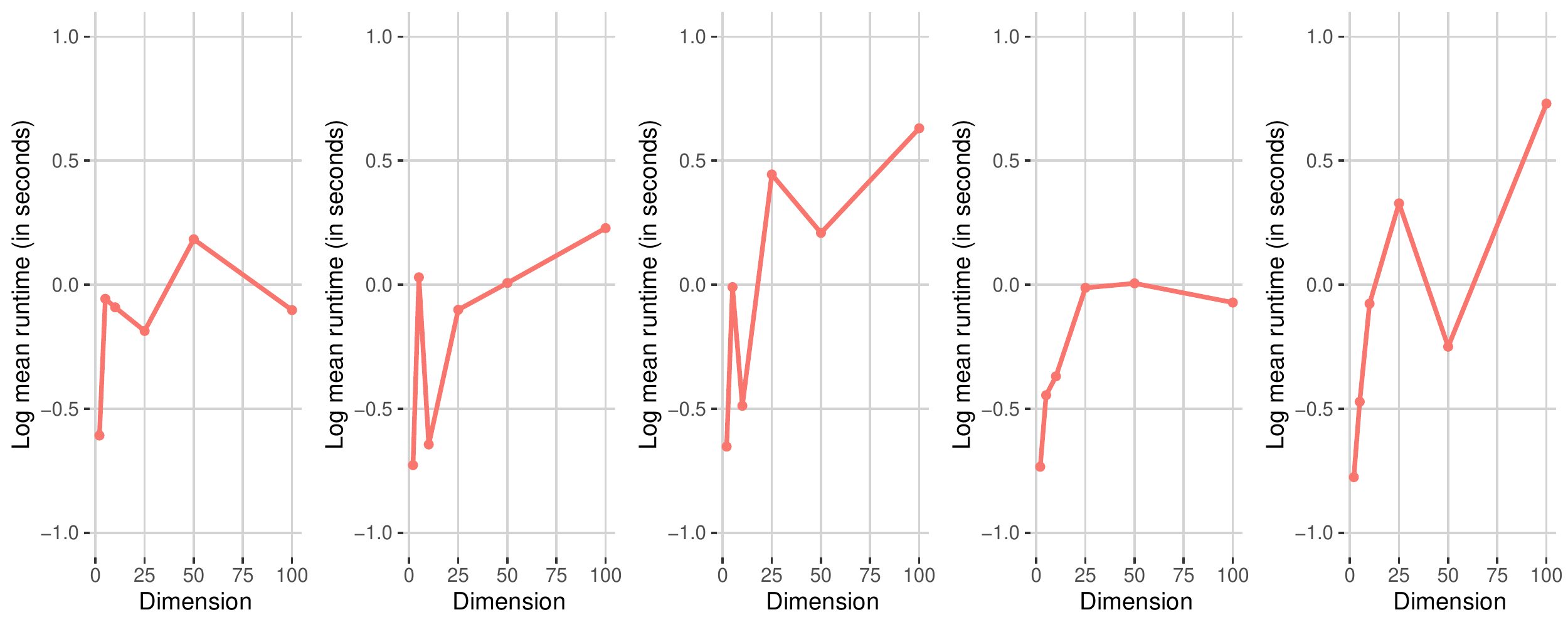}%
	\caption{Run times for the Cutting Planes algorithm for the distortion risk budgeting problem. From left to right, $N=1\,000, \ 2\,000, \ 3\,000, \ 4\,000, \ 5\,000.$ All plots are based on the multivariate Student $t$ model.}%
	\label{fig:distortion_dim_time_TRUE}%
\end{figure}
Even after averaging over 10 different runs, the results are considerably more volatile than the ones observed for ES and EVaR. This is due to a combination of factors.
First, sorting the losses, which is required for calculating distortion risk measures, may be faster or slower depending on the simulations themselves and not just their size.
Second, the number of CP iterations to achieve convergence
may be more volatile in this setting.
Nonetheless, convergence is achieved fast enough in order to enable sensitivity studies and backtests over vast time horizons and portfolio sizes.

In Appendix \ref{sec:appendix_numerical_studies}, we also report the run times for the multivariate Gaussian model, which are qualitatively similar.

\section{Investment Strategies}\label{sec:alternative-strategies}
In this section we present several portfolio selection methods and the stochastic models used to generate the RB portfolios.
For comparison purposes we set $v_0=1$ in which case $\vv = \w$ and we write $\w$ throughout this section. Further,  we consider only portfolios that are
fully invested with ($\textbf{1}^\top \w = 1$) and long-only ($\w \geq 0$).  
All portfolios are adjusted based on a set of $\Delta$ days previous to the current period, say $t$. In other words, data in the interval $[t-\Delta, \, t-1]$ is used to construct the portfolio at time $t$. We assume all portfolios are rebalanced daily. 
Daily rebalancing guarantees that the portfolios have the risk budgeting characteristic at all days. Of course, some heuristics could be used to reduce the number of trades. For example, one could trade only if the risk contributions vary by, say, more than $1\%$; when the number of shares on the previous day is kept constant. 

Apart from the proposed RB portfolios with the ES risk measure,
all  other portfolios rely only on estimates of the mean vector $\Mu$
and covariance matrix $\Sigma$ of the returns, which we estimate without imposing any assumption on its probabilistic distribution. That is, at time $t$, the mean vector and covariance matrix of the returns are estimated through their empirical counterparts
\begin{equation}\label{eq:mean-covar-estim}
    \hat \mu_{i,t} = \frac{1}{\Delta} \sum_{k=1}^{ \Delta} r_{i,t-k} \text{ \ and \ } \hat \Sigma_{i,j,t} = \frac{1}{\Delta} \sum_{k=1}^{\Delta} (r_{i,t-k} - \hat \mu_{i,t-k})(r_{j,t-k} - \hat \mu_{j,t-k}),
\end{equation}
where $\p_{t} = (p_{1,t},\ldots,p_{d,t})$ and $\bfr_{t} = (r_{1,t},\ldots,r_{d,t})$ are the vectors of prices and returns of all $d$ assets at time $t$, respectively, and $r_{i,t} = \frac{p_{i,t} - p_{i,t-1}}{p_{i,t-1}}$. Due to the small number of assets under consideration, the non-linear shrinkage estimator for the covariance matrix (see, e.g., \cite{ledoit2017numerical} and \cite{nlshrink}) did not provide any significantly different allocations, and thus are not reported.

For all portfolios introduced in the next subsection, we use the implementations presented in the vignette of the \texttt{R} package \texttt{riskParityPortfolio}  \citep{cardoso2021riskparity}.

\subsection{Mean-Variance Portfolio Selection Models}
In what follows, we give a brief description of portfolio selection models that are only
based on the mean and covariance of returns only.
For simplicity, we suppress the time index and assume portfolios are being computed for a generic time period $t$,
each period with its respective empirical mean and covariance estimates, see Equation \eqref{eq:mean-covar-estim}.

\begin{portfolio}[Maximum Sharpe ratio (\msr{})]
The maximum Sharpe ratio portfolio~\citep{sharpe1966mutual}
is implemented assuming that the risk free rate is zero.
The weights of the maximum Sharpe ratio portfolio are the solution to
\begin{align*}
	\max_{\w} \;\frac{\w^\top \Mu}{\sqrt{\w^\top \Sigma \w}} \, , \qquad \text{subject to} \quad & \w \ge \ 0 \text{ \  and \ } \textbf{1}^\top\w = 1.
\end{align*}
By homogeneity, the maximum Sharpe ratio portfolio can be transformed to a quadratic optimisation problem given by
\begin{align*}
	\min_{\vv} \vv^\top \Sigma \vv \, ,
    \qquad \text{subject to} \quad
    & \vv \ge \ 0 \text{ \  and \ } \Mu^\top \vv = 1.
\end{align*}
Then, to get weights $\w$, one simply normalises the solution $\vv$.
\end{portfolio}

\begin{portfolio}[Markowitz Mean-Variance (\mmv{})]
The weights for the \cite{markowitz1952portfolio} mean-variance portfolio are given by
\begin{align*}
	\max_{\w}\; \w^\top \Mu - \lambda \w^\top \Sigma \w \, , \qquad \text{subject to} \quad & \w \ge \ 0 \text{ \  and \ } \textbf{1}^\top\w = 1\,,
\end{align*}
where $\lambda > 0$ is user-defined risk-aversion parameter.
In our experiments we set $\lambda= \frac{1}{2}.$
\end{portfolio}

\begin{portfolio}[Global minimum variance (\gmv{})] As per its name, the weights of the global minimum variance portfolio are such that the portfolio's variance is minimised. Under assumption of no short selling and full investment restrictions, the \gmv{} weights are the solution to
\begin{align*}
	\min_{\w} \;\w^\top \Sigma \w \, , \qquad \text{subject to} \quad & \w \ge \ 0 \text{ \  and \ } \textbf{1}^\top\w = 1\,.
\end{align*}
Note that the \gmv{} is equivalent to \mmv{} under the assumption that expected returns are all equal to zero.
\end{portfolio}

\begin{portfolio}[Equal weights (\ew{})]
The equal weights portfolio assumes that the weights are equal to $w_i = 1/d$, for all $i \in \{1, \ldots, d\}$, and constant throughout time.
\end{portfolio}

\subsection{Risk Budgeting Portfolios}
We compare the portfolios introduced in the previous section with different risk budgeting portfolios. Specifically, we consider the standard deviation risk parity first studied in \cite{Maillard2010JPM} and a Gaussian risk parity with the ES risk measure, where the returns are assumed to follow a multivariate Gaussian distribution. In the next section, we describe two further risk budgeting portfolios with the ES risk measure, where the underlying returns follow dynamic conditional correlation models. For all the risk parity portfolios, the ES is computed $h = 5$ days ahead.

\begin{portfolio}[Standard deviation risk parity (\sdrp{})] Most of the literature concerning risk parity portfolios is based on the assumption that the risk is measured by the standard deviation of the portfolio, that is
\[ \rho(L(\vv)) = \sqrt{\vv^\top \Sigma \vv}\,. \]
Under this assumption, the risk budgeting optimisation problem (\ref{eqn:RB-optimisation1}) can be efficiently solved via the algorithm proposed in~\cite{feng2015scrip}. 
\end{portfolio}

\begin{portfolio}[Gaussian risk parity (\grp{})] For this portfolio we measure the risk of the portfolio via the ES for a pre-specified security level $\alpha$ and assume that the returns follow a multivariate Gaussian distribution. \end{portfolio}

Recall that for calculating risk budgeting portfolios via the algorithms in Section \ref{sec:algorithms}, we only require simulated loss scenarios at time $t +h$.
Thus, to obtain the Gaussian risk parity portfolio at time $t$, we first  estimate the mean and covariance of the returns from data at time $t$ (using data from $[t-\Delta, t-1]$ and Equation \eqref{eq:mean-covar-estim}), and then simulate Gaussian returns up to time $t+h$. 

To compare the Gaussian risk party \grp{} portfolio with risk budgeting portfolios calculated on returns that do not follow a multivariate Gaussian distribution, we consider two dynamic conditional correlation models discussed in the next section;
leading to risk-budgeting portfolios \rpa{} and \rpb{}.

\subsubsection{Dynamic Conditional Correlation Models}
In order to model the time series dynamics of the returns, we use a DCC-GARCH model (\cite{engle2002dynamic}). As seen below, the DCC-GARCH model accounts for the fact that even though returns are generally serially uncorrelated, they may present contemporaneous correlation. The heteroskedasticity in the returns is handled by univariate GARCH models (one for each asset) and the residual's correlation is dynamically modelled.

Let us denote by $R_{i,t} = \log\left(\frac{P_{i,t} }{p_{i,t-1}}\right)$ the (random) log-return of asset $i$ at time $t$. The vector with log-returns is denoted by $\mathbf{R}_{t} = (R_{1,t}, \ldots, R_{d,t})$ assumed to satisfy
\[ \mathbf{R}_{t} =  \Mu_t + \Epsilon_t, \] 
where $\Epsilon_t = \mathbf{H}_t^{1/2}\mathbf{z}_t$. The $d$-dimensional vector $\mathbf{z}_t$ is assumed to be i.i.d. and scaled, in the sense that each component of $\mathbf{z}_t$ has mean 0 and variance 1.
Therefore, $\Mu_t$ and $\mathbf{H}_t$ are, respectively, the conditional mean ($d$-dimensional vector) and covariance ($d\times d$ matrix) of $\mathbf{R}_{t}$, when data up to time $t-1$ has been observed.

The Dynamic Conditional Correlation (DCC) model proposed by \cite{engle2002dynamic} assumes that the conditional covariance of the returns can be decomposed as follows
\[ \mathbf{H_t} = \mathbf{D}_t \mathbf{S}_t \mathbf{D}_t,\]
with $\mathbf{D}_t = \diag(\sqrt{h_{11,t}}, \ldots, \sqrt{h_{dd,t}})$ being a matrix with the conditional standard deviations of the returns and $\mathbf{S}_t$ the dynamic conditional correlation matrix of dimension $d \times d$.

Positive definiteness of the time-dependent correlation matrix $\mathbf{S}_t$ is achieved through a proxy process, which, for each $d\times d$ matrix $\mathbf{Q}_t$, is defined as
\[\mathbf{Q}_{t} = (1-A-B) \bar{\mathbf{Q}} + A \mathbf{z}_{t-1} \mathbf{z}_{t-1}^\top + B \mathbf{Q}_{t-1},\]
and 
\[\bar{\mathbf{Q}} = \frac{1}{\Delta}\sum_{k=t-\Delta}^{t-1} \mathbf{z}_k \mathbf{z}_k^{\top}\]
is the unconditional matrix of the standardised errors $\mathbf{z}_t$. Recall that, at time $t$ the model is estimated using data from $t-\Delta$ to $t-1$.
To ensure stationarity and positive definiteness of $\mathbf{Q}_t$, it is assumed that the real numbers $A$ and $B$ are such that $A,B\geq 0$ and $A+B < 1$. The correlation matrix of interest is then defined as
\[\mathbf{S}_t = \diag(\mathbf{Q}_t)^{-1/2} \mathbf{Q}_t \diag(\mathbf{Q}_t)^{-1/2}, \]
where here $\diag(\mathbf{Q}_t)$ is the diagonal matrix whose entries
are the diagonal elements of $\mathbf{Q}_t$.

Different specifications for the marginal conditional variances $(h_{11, t},\ldots, h_{dd, t})$ of the returns lead to different market models. In our examples, we consider two different market models, leading to two risk budgeting portfolios. The first portfolio (\rpa{}), assumes that all conditional variances  follow a standard GARCH model, while for the second portfolio (\rpb{}), all conditional variances follow a GJR-GARCH. Both DCC models are estimated using the \texttt{R} package \texttt{rmgarch} \citep{rmgarch} through maximum likelihood.

\begin{portfolio}[DCC-GARCH (\rpa{})]
This risk budgeting portfolio consider the risk measure ES at level $\alpha$ and assumes that the conditional variances in the DCC model are given by a simple univariate GARCH(1,1) model, see, e.g., \cite{bollerslev1986generalized}. That is, for each asset $i \in \{1,\ldots,d\}$, the conditional variance of the returns at time $t$ are given by
\[ h_{ii,t} = \theta_i + a_i \epsilon_{i,t-1}^2 + b_i h_{ii,t-1}, \]
where $a_i,b_i \geq 0$, $\theta_i > 0$, and $\epsilon_{i, t}$ are i.i.d standard normally distributed.
\end{portfolio}

\begin{portfolio}[DCC-GJR-GARCH (\rpb{})] 
This risk budgeting portfolio also considers the ES risk measure at level $\alpha$.  
The difference between this portfolio (\rpb{}) and the \rpa{} is on the specification of the conditional variances in the DCC. For the \rpb{}, we assume a GJR-GARCH(1,1) model, see, e.g., \cite{glosten1993relation},
\[ h_{ii,t} = \theta_i + (a_i + c_i \mathds{1}_{\{\epsilon_{i,t-1} < 0 \}}) \epsilon_{i,t-1}^2 + b_i h_{ii,t-1}, \]
with $a_i,b_i,c_i \geq 0$, $\theta_i >0$, and $\epsilon_{i, t}$ are i.i.d standard normally distributed.. This model can be seen as an extension of the DCC-GARCH, accounting for the stylised fact that negative shocks at time $t-1$ have a stronger impact in the variance at time $t$ than positive ones.
\end{portfolio}

To calculate the optimal weights for the risk budgeting portfolios \rpa{} and \rpb{}, we use Algorithm \ref{alg:ES_cp}.

\section{Comparison of Portfolio Strategies}\label{sec:application}
In this section we analyse the long-run performance of the mean-variance portfolio models and the risk budgeting strategies under a universe of equity indices from some of the major economies in the world.
Since different investors may have access and/or preferences to different ETFs that replicate the indices, we perform all the analysis assuming the investor trades assets that perfectly replicate the indices. Henceforth, we refer to the indices as \emph{assets}. The universe of assets considered in this example is presented in Table \ref{tbl:assets}. The data has been collected from Yahoo! Finance using the \texttt{R} package \texttt{BatchGetSymbols} \cite{BatchGetSymbols}, using the tickers from the first column in Table \ref{tbl:assets}. The daily adjusted closing prices were retrieved for the period ranging from 04/Jan/2000 to 30/Dec/2020.
\begin{table}[ht]
	\centering
	\begin{tabular}{rlll}
		\toprule\toprule
		& Ticker & Name & Country \\ 
		\midrule
		1 & \verb|^|NDX & NASDAQ 100 & USA \\ 
		2 & \verb|^|GSPC & S\&P 500 &  USA \\ 
		3 & \verb|^|HSI & Hang Seng Index &  HKG \\ 
		4 & \verb|^|FTSE & FTSE 100 &  GBR \\ 
		5 & \verb|^|DJI & Dow Jones Industrial Average &  USA \\ 
		6 & \verb|^|GDAXI & DAX Performance-Index &  GER \\ 
		7 & \verb|^|RUT & Russell 2000 &  GBR \\ 
		8 & \verb|^|FCHI & CAC 40 &  FRA \\ 
		9 & \verb|^|BVSP & Ibovespa &  BRA \\ 
		10 & 000001.SS & SSE Composite Index &  CHN \\ 
		11 & \verb|^|N225 & Nikkei 225 &  JPN \\ 
		\bottomrule\bottomrule
	\end{tabular}
	\caption{Universe of assets.}
	\label{tbl:assets}
\end{table}

\begin{table}[ht]
	\centering
	\begin{tabular}{lrrrrrr}
		\toprule \toprule
		&&&&&& Max. \phantom{0}\\
		& Volatility & Return & Sharpe & $\text{VaR}_{0.05}$ &  $\text{ES}_{0.05}$ & drawdown \\ 
		\midrule
  NASDAQ 100 & 24.36\% & 18.17\% & 0.746 & -2.36\% & -3.80\% & -62.17\% \\ 
  S\&P 500 & 22.29\% & 9.11\% & 0.409 & -2.07\% & -3.62\% & -64.59\% \\ 
  Hang Seng Index & 26.15\% & 4.51\% & 0.172 & -2.52\% & -4.07\% & -73.94\% \\ 
  FTSE 100 & 21.28\% & 0.92\% & 0.043 & -2.12\% & -3.34\% & -56.73\% \\ 
  Dow Jones Industrial Average & 21.39\% & 8.65\% & 0.404 & -1.99\% & -3.45\% & -60.74\% \\ 
  DAX Performance-Index & 24.67\% & 7.21\% & 0.292 & -2.39\% & -3.89\% & -62.86\% \\ 
  Russell 2000 & 28.44\% & 8.45\% & 0.297 & -2.66\% & -4.52\% & -70.32\% \\ 
  CAC 40 & 25.18\% & 0.81\% & 0.032 & -2.48\% & -3.99\% & -69.31\% \\ 
  Ibovespa & 31.65\% & 9.69\% & 0.306 & -2.94\% & -4.75\% & -73.77\% \\ 
  SSE Composite Index & 28.48\% & 8.26\% & 0.290 & -2.96\% & -4.74\% & -80.97\% \\ 
  Nikkei 225 & 26.28\% & 4.47\% & 0.170 & -2.55\% & -4.14\% & -70.92\% \\ 
		\bottomrule \bottomrule
	\end{tabular}
\caption{Summary statistics for the assets. The volatility, return, and Sharpe ratio are annualised.}
\label{tbl:summaryStats}
\end{table}

Summary statistics for all assets across the period of study are presented in Table \ref{tbl:summaryStats}. In this table we present the annualised volatility, annualised return, annualised Sharpe ratio (ratio between return and volatility), $\text{VaR}_{0.05}$, $\text{ES}_{0.05}$, and the maximum drawdown for the entire period. 
Figure \ref{fig:all_assets} displays the values of each asset over the time period. For comparison, the assets were normalised to have initial price of 100\$.

\begin{sidewaysfigure}
\begin{center}
	\includegraphics[width=0.95\linewidth]{./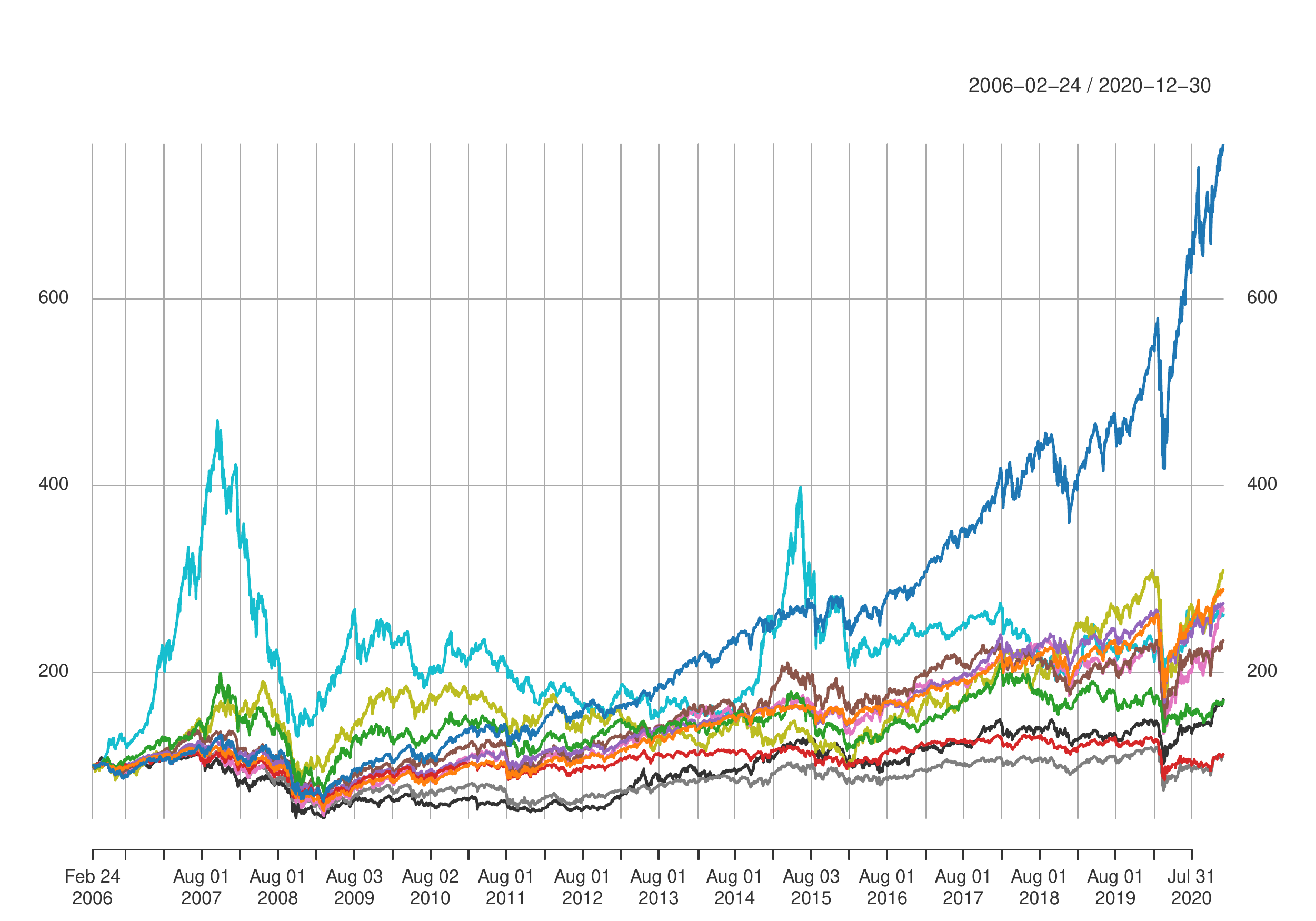}
	\caption{Asset values normalised to have initial price of 100\$. \tikzline[fill=colAsset1] NASDAQ 100, 
		\tikzline[fill=colAsset2] S\&P 500, 
		\tikzline[fill=colAsset3] Hang Seng Index, 
		\tikzline[fill=colAsset4] FTSE 100, 
		\tikzline[fill=colAsset5] Dow Jones Industrial Average, 
		\tikzline[fill=colAsset6] DAX Performance-Index, 
		\tikzline[fill=colAsset7] Russell 2000, 
		\tikzline[fill=colAsset8] CAC 40, 
		\tikzline[fill=colAsset9] Ibovespa, 
		\tikzline[fill=colAsset10] SSE Composite Index, 
		\tikzline[fill=colAsset11] Nikkei 225}
	\label{fig:all_assets}
\end{center}
\end{sidewaysfigure}

\subsection{Results}
As previously discussed, for each portfolio strategy, parameters are estimated using the previous $\Delta$ returns. For the following experiments, is set the estimation window to five ``business years'', i.e., $\Delta = 5 \times 252$. For the portfolios dependent on our algorithm (\rpa{}, \rpb{}) we assume the horizon for the risk measures is set as $h=5$ days ahead. As remarked in the opening statement of \cite{christoffersen1998horizon}: ``There is no one `magic' relevant horizon for risk management''. The choice of $h=5$ (one business week) is a compromise between the usual regulatory 10-days ahead horizon and the managerial one day ahead (see \cite{risk2020meyer}). Unless explicitly stated otherwise, we calculate the ES at significance level $\alpha = 0.95$ using $N=1\,000$ simulations at time $t+ h$. For the DCC-GARCH models, we compare two different marginal specifications, simple GARCH leading to portfolio \rpa{} and GJR-GARCH leading to portfolio \rpb{}. Both DCC-GARCH models are simulated using the function \texttt{dccsim} from the \texttt{R} package \texttt{rmgarch} \cite{rmgarch}.

For a global perspective on the performance of the portfolios under consideration, Figure \ref{fig:all_assets_wealths} presents the wealth of each asset over time, where all portfolio have an initial endowment of $100\$$. When analysing the entire time period, the Markowitz's minimum-variance (\mmv{}) portfolio shows the highest return in February 2020, a portfolio value 5 times the initial endowment. The maximum Sharpe ratio (\msr{}) portfolio returns around 4.5 times the initial endowment. All other portfolios provide returns around 3 times the initial endowment. It should be noticed, however, that the performance of portfolios are not consistent through time. For instance, in early 2013, the \msr{} portfolio shows the worst cumulative performance.

The equal weights (\ew{}) portfolio is consistently presenting the worst performance. As we see below, due to the similarity of the portfolio composition of the risk parity and the \ew{} portfolio, the risk parity portfolios present performances similar to the \ew{} portfolio.

Figure \ref{fig:all_assets_pairs} shows the daily returns for all pairs of portfolios, where each point denotes the returns for a specific day. First, we note that all pairs are positively dependent, but some, such as the portfolios based on risk parity (\sdrp{}, \grp{95}{}, \rpa{95}{}, and \rpb{95}{}) and \ew{} have very similar returns for almost all days. On the other hand, the \msr{} portfolio stands out as the one whose returns have a weaker dependence to the others. It should be noticed, that having the same daily returns is a necessary, but not sufficient, condition to guarantee two portfolios are identical. 
\begin{sidewaysfigure}
	\includegraphics[width=0.95\linewidth]{./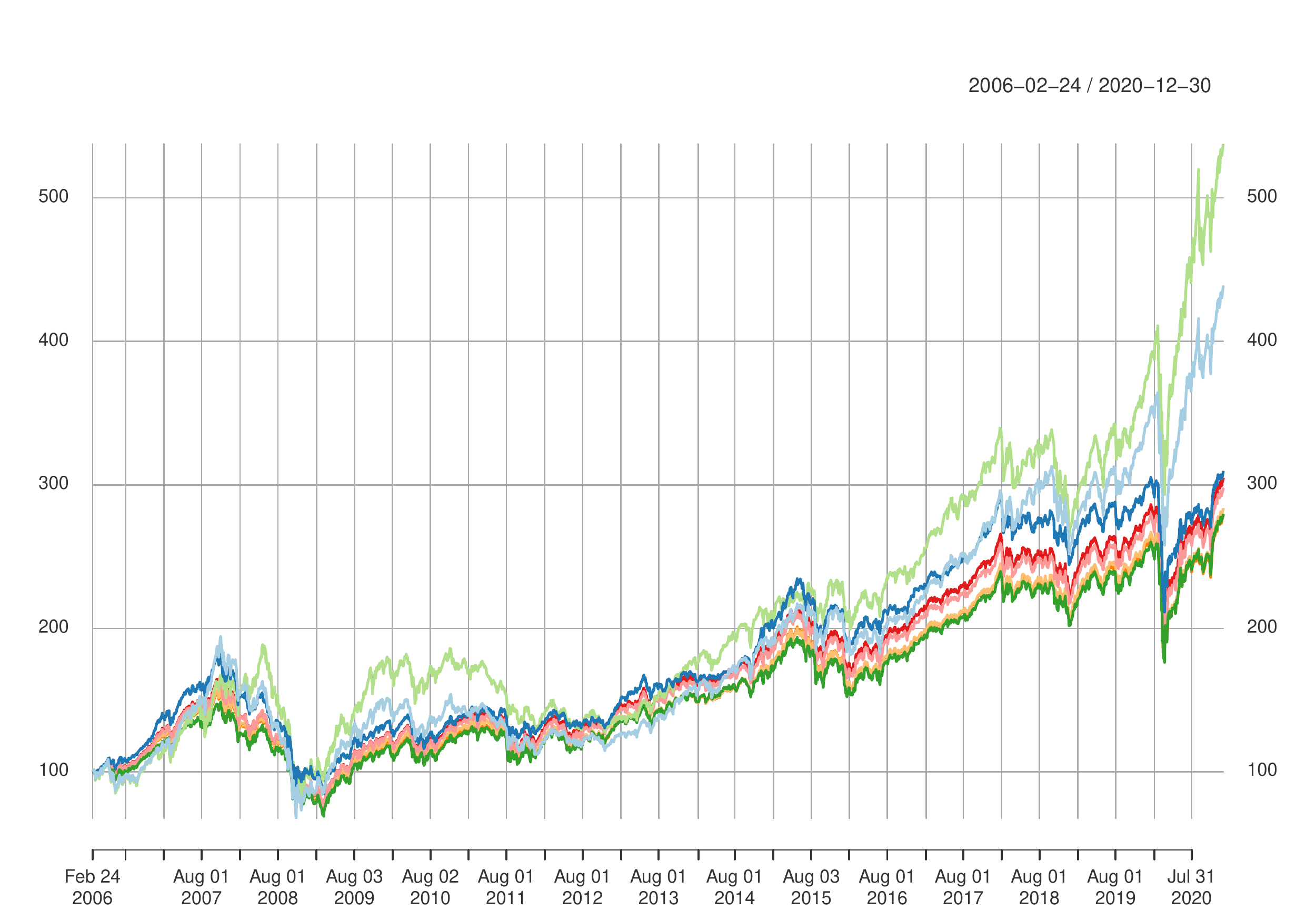}
	\caption{Values of each portfolio with initial endowment of 100\$. \tikzline[fill=col1] \msr{}, \tikzline[fill=col2] \gmv{}, 	\tikzline[fill=col3] \mmv{}, \tikzline[fill=col4] \ew{}, \tikzline[fill=col5] \sdrp{}, \tikzline[fill=col6] \grp{95}{}, \tikzline[fill=col7] \rpa{95}{}, \tikzline[fill=col8] \rpb{95}{}}
	\label{fig:all_assets_wealths}
\end{sidewaysfigure}

\begin{figure}[ht]
\begin{center}
	\includegraphics[width=0.8\linewidth]{./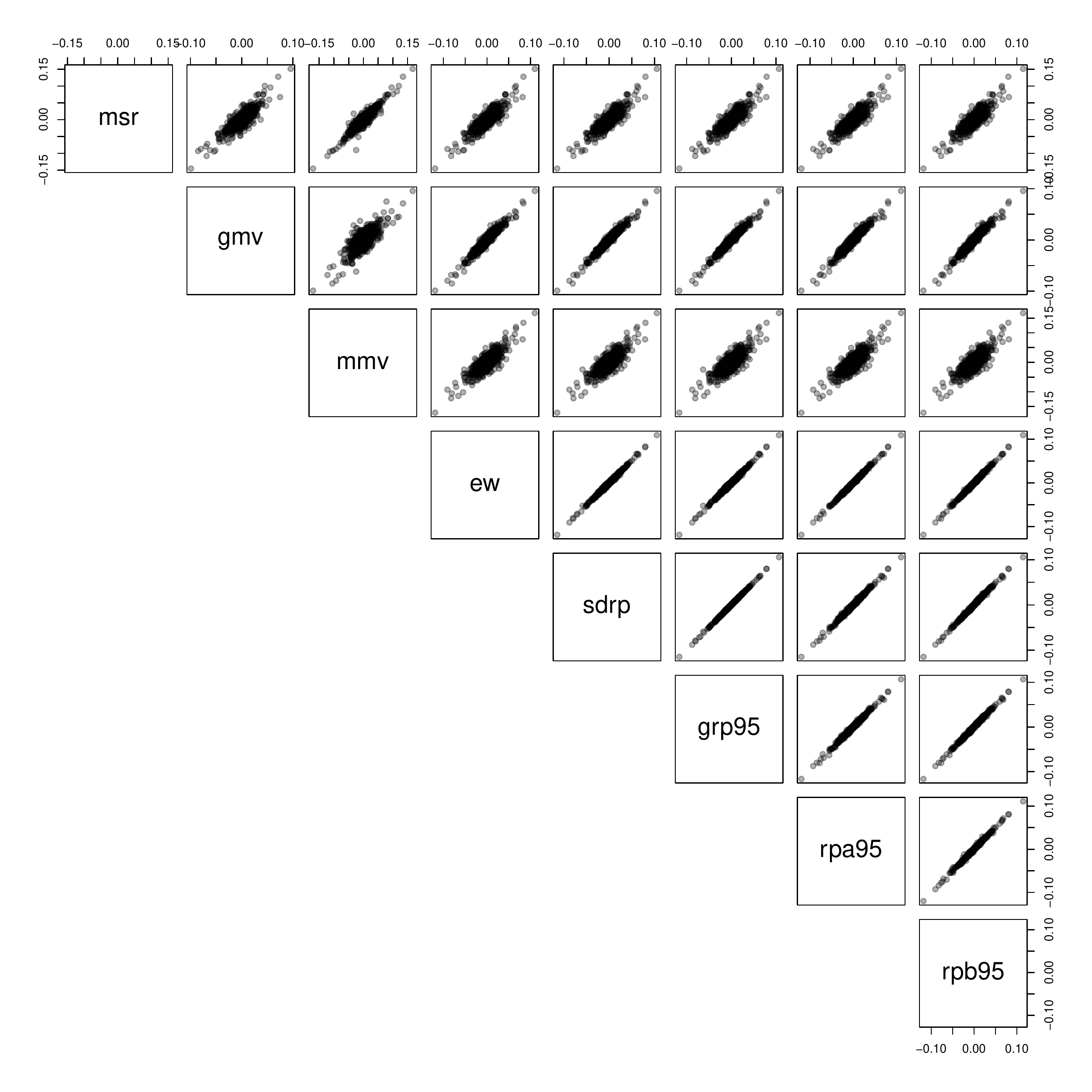}
	\caption{Scatter plot of daily returns of all pairs of portfolios.}
	\label{fig:all_assets_pairs}
\end{center}
\end{figure}

While the analysis of pairs of returns from Figure \ref{fig:all_assets_pairs} suggests that the risk parity portfolios are robust to different model assumptions, Figure \ref{fig:robustness_alpha} explores the sensitivity to the significance level $\alpha$ of the ES. In each one of the four sub-figures, we present the histogram of the returns for specific portfolios. On the top-left figure, we present the competing portfolios, from where we note a slightly heavier tail in the \msr{}. In the other sub-figures we fix the portfolio strategy (either \grp{}, \rpa{}, or \rpb{}) and vary the significance level $\alpha$. Within each sub-figure/model the histograms are almost indistinguishable, suggesting some robustness with respect to $\alpha$. 

Table \ref{tbl:portfolio_table_stats} in the Appendix \ref{sec:appendix_summary_statistics} presents the same summary statistics reported Table \ref{tbl:summaryStats} for all portfolios, including ES significance levels $\alpha \in \{0.80, 0.85, 0.90, 0.95, 0.96, 0.97, 0.98, 0.99 \}$. Although some fluctuation is seen, there does not seem to be a clear dependence of the statistics on the significance level. Over the entire 15 years of data, as we see in Figure \ref{fig:all_assets_wealths}, the \mmv{} portfolio has the highest annualised return, followed by the \msr{}. The higher returns of the \msr{} portfolio are linked to a high volatility, the highest in the sample. As expected, the global minimum variance (\gmv{}) shows the lowest volatility. Even though the \msr{} portfolio is constructed to have the maximum Sharpe ratio, since the parameters are \textit{backward-looking}, other portfolios achieve Sharpe ratios even higher, see Table \ref{tbl:stats_table}. 

Contrary to popular belief, we do not observe a stronger resistance (as measured through the Sharpe ratio) of the risk parity portfolios during moments of crisis, e.g. 2008 and 2020. This can be explained by the fact that all assets in the investment universe belong to the same asset class (Equity indices) and concurrently observed strong losses during these periods. Additionally, the investment strategies studied are bounded to be long-only and fully invested.

Among the proposed risk parity portfolios (\sdrp{}, \grp{}, \rpa{}, and \rpb{}), their annualised volatilities vary between $18.44\%$ and $19.33\%$, which are smaller than those of \mmv{} and \msr{}. Of the two DCC-GARCH risk parity models (\rpa{}{} and \rpb{}{}), the volatility adjusted performance (Sharpe ratio) is uniformly lower, across all $\alpha$ levels, for the GJR-GARCH model (\rpb{}{}). Nonetheless, both models' Sharpe are higher than the \ew{}'s and smaller than those of other portfolios. The maximum drawdown for the risk parity portfolios (\sdrp{}, \grp{}, \rpa{}, and \rpb{}) is on par with the equal weights portfolio (\ew{}) and significantly higher than those of the best performing portfolios (\msr{} and \mmv{}). A similar result is observed for the respective $\text{VaR}_{0.05}$ and $\text{ES}_{0.05}$.

\begin{figure}[ht]%
	\centering
	\subfloat[\centering{\tikzsquare[fill=col1] \msr{}, \tikzsquare[fill=col2] \gmv{}, \tikzsquare[fill=col3] \mmv{}, \tikzsquare[fill=col4] \ew{}, \tikzsquare[fill=col5] \sdrp{}}]{{\includegraphics[width=0.4\linewidth]{./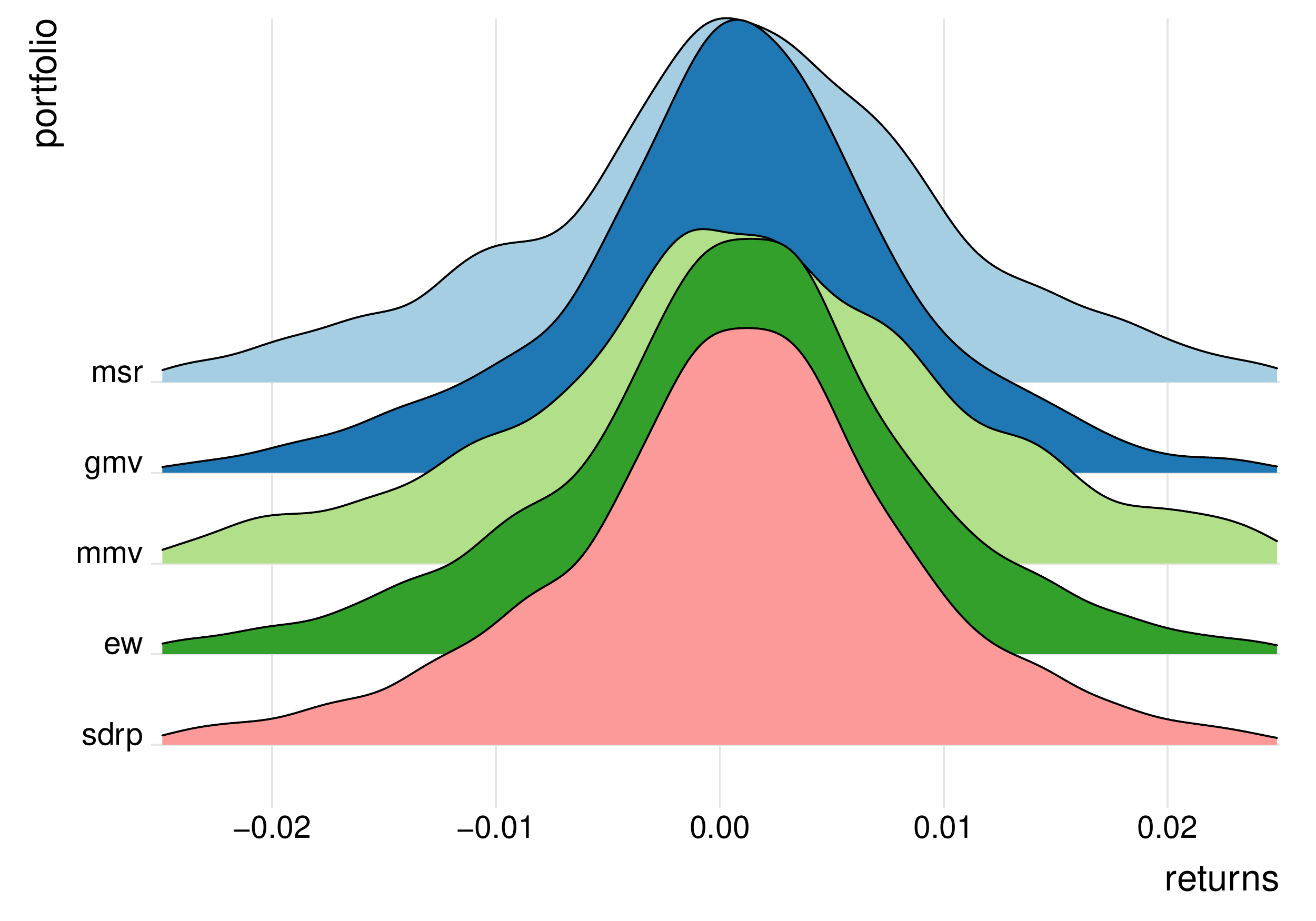} }}%
	\qquad
	\subfloat[\centering \grp{} portfolios]{{\includegraphics[width=0.4\linewidth]{./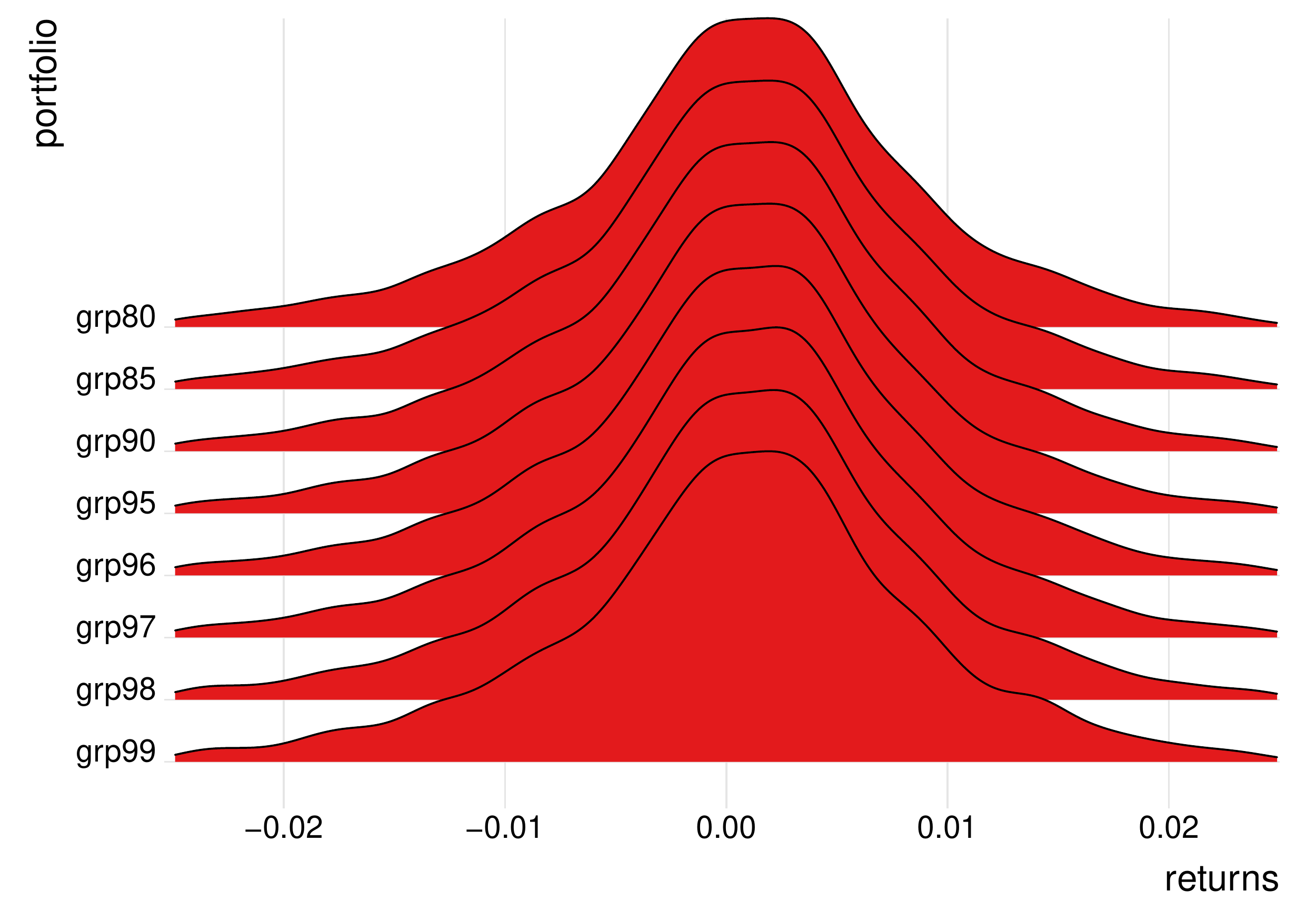} }}%
	\qquad
	\subfloat[\centering \rpa{} portfolios]{{\includegraphics[width=0.45\linewidth]{./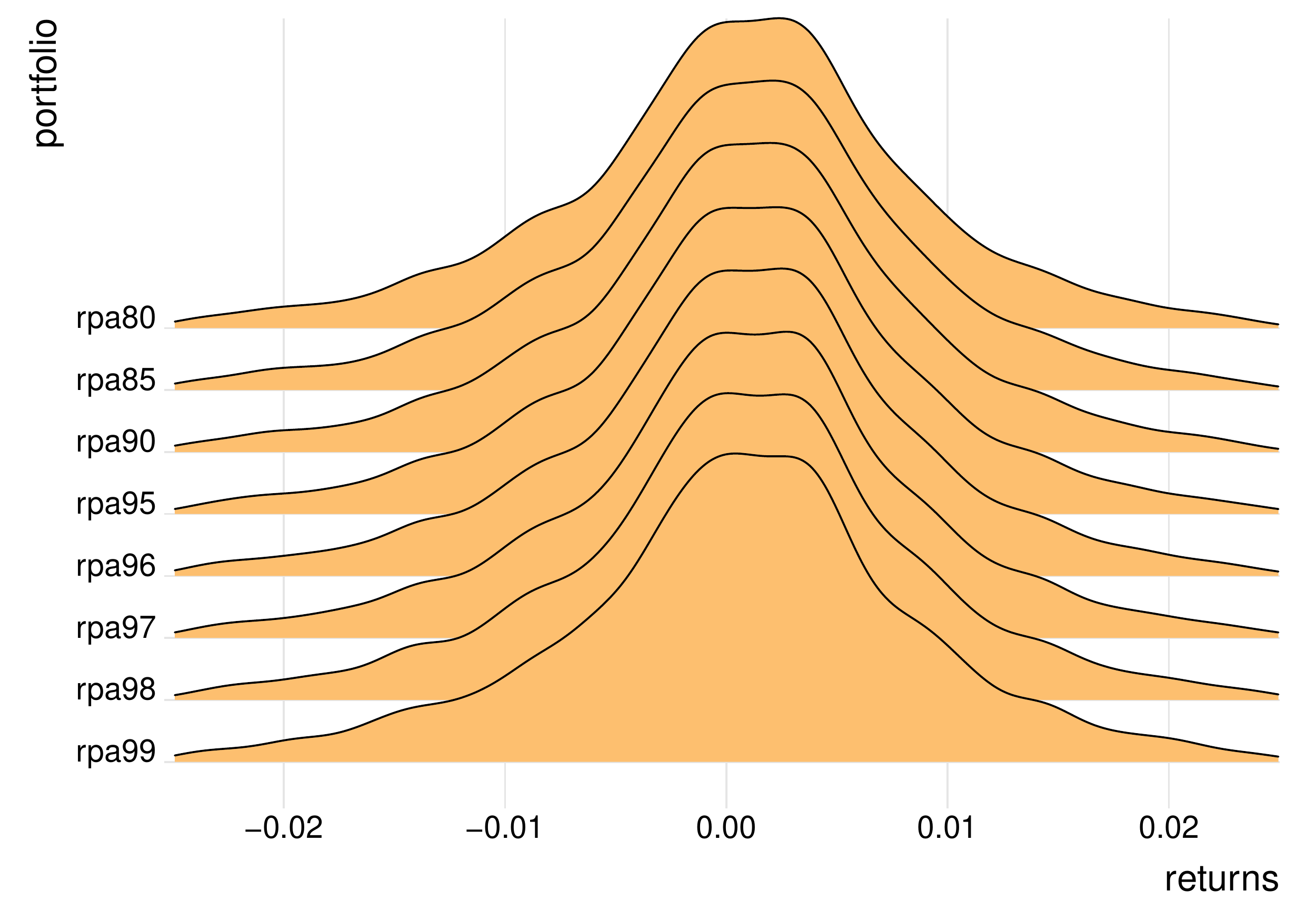} }}%
	\qquad
	\subfloat[\centering \rpb{} portfolios]{{\includegraphics[width=0.4\linewidth]{./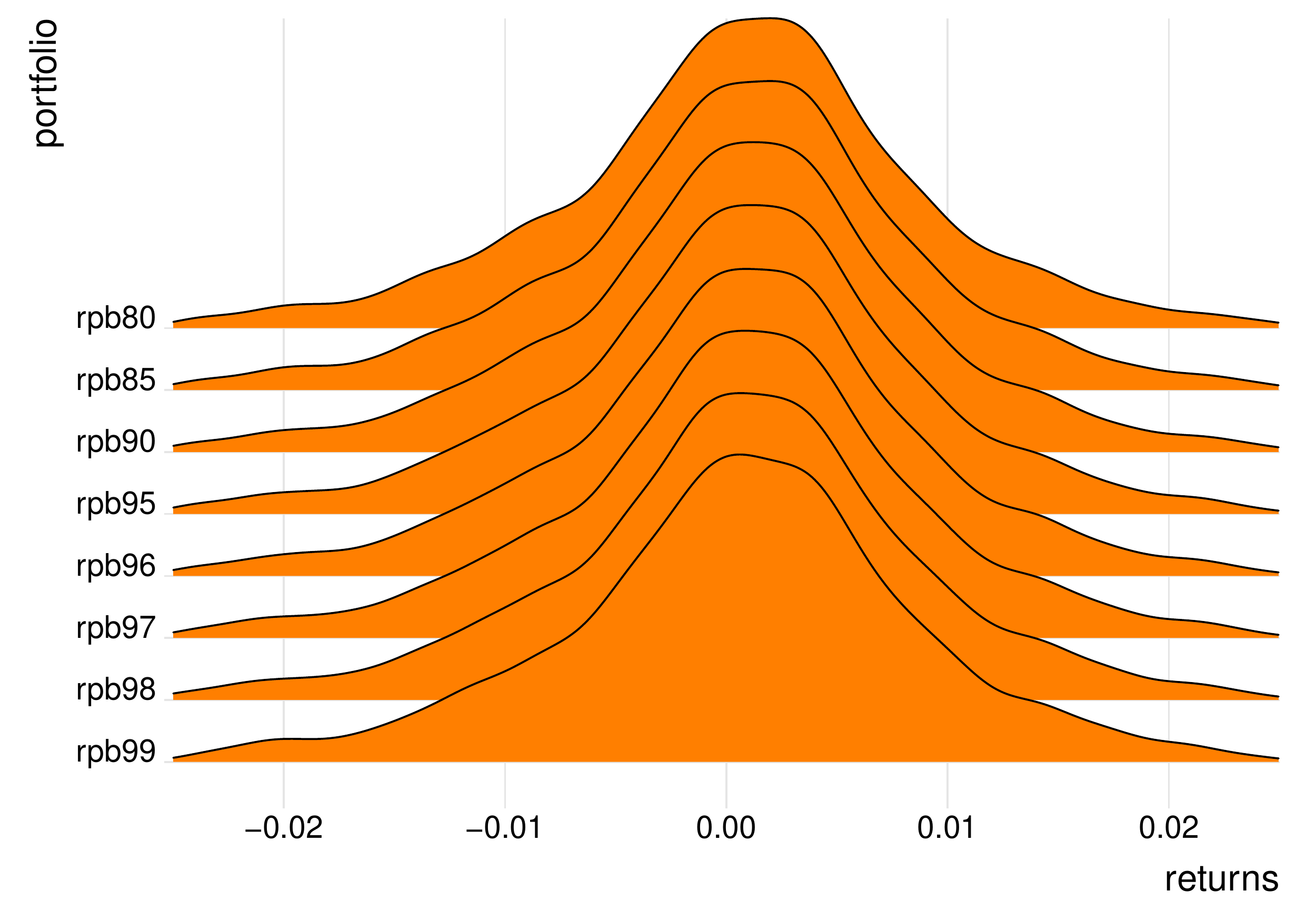} }}%
	\caption{Histogram of the returns of all portfolios considered.}%
	\label{fig:robustness_alpha}%
\end{figure}

\begin{table}[ht]
\centering
\begin{tabular}{lrrrrrrrr}
  \toprule\toprule
 & \msr{} & \mmv{} & \gmv{} & \ew{} & \sdrp{}  & \grp{95}{} & \rpa{95}{} & \rpb{95}{} \\ 
  \midrule
  2006 &  0.96 & \cellcolor{colTable0} 4.40 &  0.80 &  2.11 & \cellcolor{colTable2} 3.06 & \cellcolor{colTable1} 3.13 &  2.72 &  2.69 \\ 
  2007 & \cellcolor{colTable0} 2.85 & \cellcolor{colTable1} 1.95 & \cellcolor{colTable2} 1.91 &  1.47 &  1.74 &  1.75 &  1.62 &  1.59 \\ 
  2008 & \cellcolor{colTable1}-1.16 & -1.43 & \cellcolor{colTable0}-0.77 & \cellcolor{colTable2}-1.24 & -1.32 & -1.33 & -1.24 & -1.24 \\ 
  2009 & \cellcolor{colTable0} 2.96 &  1.93 & \cellcolor{colTable1} 2.53 &  1.85 &  1.96 & \cellcolor{colTable2} 2.03 &  1.72 &  1.72 \\ 
  2010 & -0.36 &  0.18 & -0.16 & \cellcolor{colTable1} 0.45 &  0.39 & \cellcolor{colTable0} 0.47 & \cellcolor{colTable2} 0.40 &  0.38 \\ 
  2011 & -1.56 & \cellcolor{colTable0}-0.55 & -1.78 & \cellcolor{colTable1}-0.65 & -0.69 & \cellcolor{colTable2}-0.68 & -0.75 & -0.73 \\ 
  2012 &  0.51 &  0.87 &  0.09 &  1.13 & \cellcolor{colTable2} 1.16 & \cellcolor{colTable1} 1.17 &  1.15 & \cellcolor{colTable0} 1.24 \\ 
  2013 & \cellcolor{colTable0} 2.83 & \cellcolor{colTable2} 1.94 & \cellcolor{colTable1} 2.78 &  1.82 &  1.81 &  1.86 &  1.59 &  1.69 \\ 
  2014 & \cellcolor{colTable1} 2.17 & \cellcolor{colTable0} 2.85 & \cellcolor{colTable2} 1.87 &  1.33 &  1.73 &  1.70 &  1.71 &  1.72 \\ 
  2015 & \cellcolor{colTable1} 0.68 & \cellcolor{colTable2} 0.36 & \cellcolor{colTable0} 0.72 &  0.31 &  0.33 &  0.34 &  0.33 &  0.35 \\ 
  2016 &  0.58 &  1.01 &  0.69 & \cellcolor{colTable0} 1.10 & \cellcolor{colTable2} 1.04 & \cellcolor{colTable1} 1.08 &  1.04 &  0.98 \\ 
  2017 & \cellcolor{colTable1} 3.27 & \cellcolor{colTable0} 4.03 & \cellcolor{colTable2} 3.26 &  2.93 &  3.08 &  3.09 &  3.05 &  3.04 \\ 
  2018 & \cellcolor{colTable0}-0.55 & \cellcolor{colTable2}-1.02 & -1.02 & -1.04 & -1.05 & -1.09 & \cellcolor{colTable1}-1.00 & -1.05 \\ 
  2019 & \cellcolor{colTable0} 2.92 &  2.33 & \cellcolor{colTable1} 2.89 &  2.63 &  2.62 &  2.63 & \cellcolor{colTable2} 2.64 &  2.62 \\ 
  2020 & \cellcolor{colTable1} 0.99 &  0.15 & \cellcolor{colTable0} 1.32 & \cellcolor{colTable2} 0.39 &  0.37 &  0.37 &  0.34 &  0.33 \\ 
   \bottomrule\bottomrule
\end{tabular}
\caption{Yearly Sharpe ratios for each portfolio. \tikzline[fill=colTable0] highest Sharpe in the year, \tikzline[fill=colTable1]  second largest Sharpe in the year and \tikzline[fill=colTable2] third best Sharpe in the year.}
\label{tbl:stats_table}
\end{table}
In order to evaluate the performance of the portfolios on a yearly basis, Table \ref{tbl:stats_table} presents the Sharpe ratio for all portfolios for each of the 15 years under consideration. Highlighted are the three best Sharpe ratios of the year; where dark green corresponds to the largest Sharpe ratio. In Table \ref{tbl:stats_table} we see a cluster of highlighted cells for the \msr{}, \mmv{} and \gmv{}. Once again, the two DCC-GARCH portfolios present very similar performance throughout the years.

\begin{figure}[h!]
\begin{center}
	\subfloat[\msr{}]{\includegraphics[width = 0.4\textwidth]{./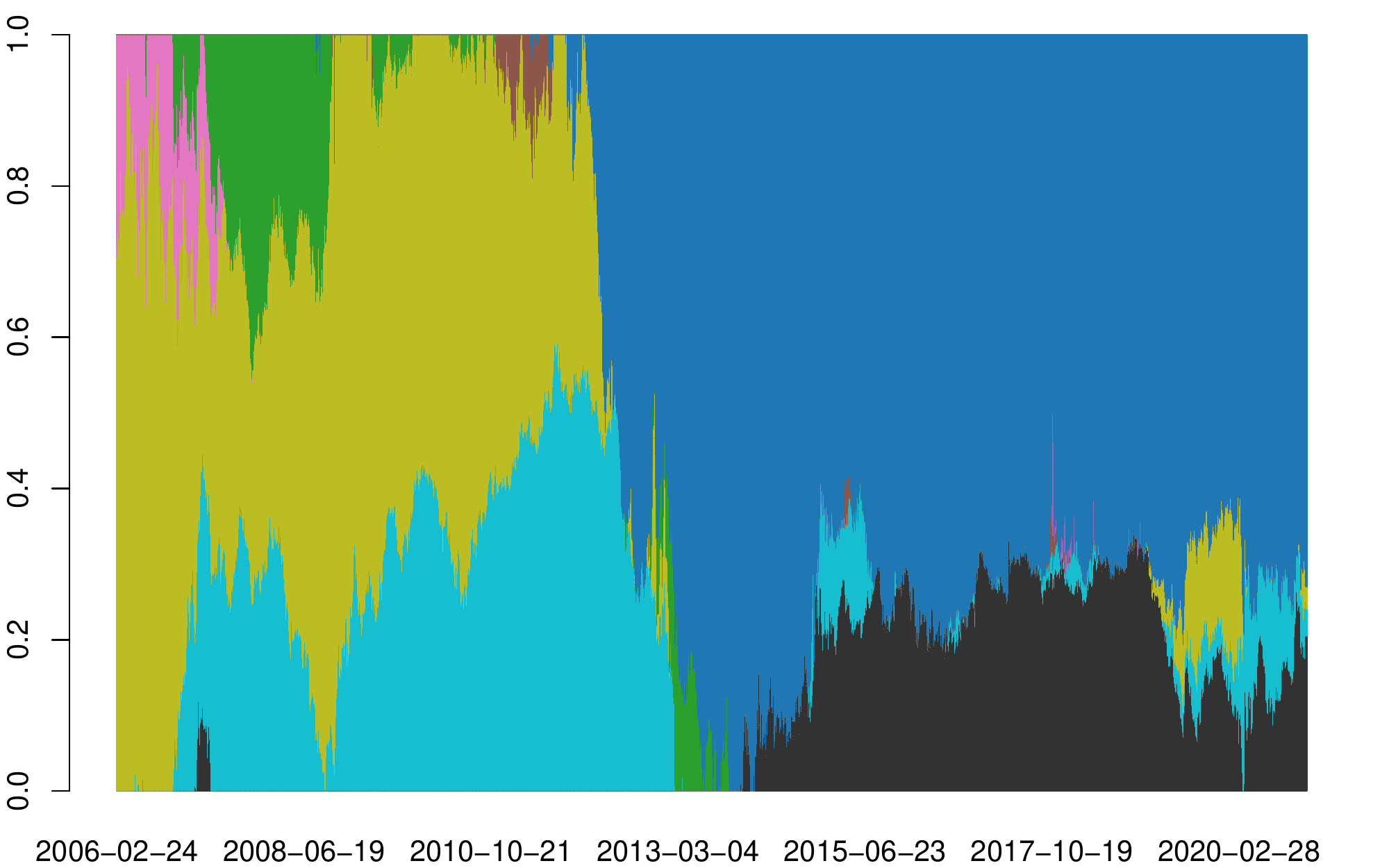}}
	\subfloat[\mmv{}]{\includegraphics[width = 0.4\textwidth]{./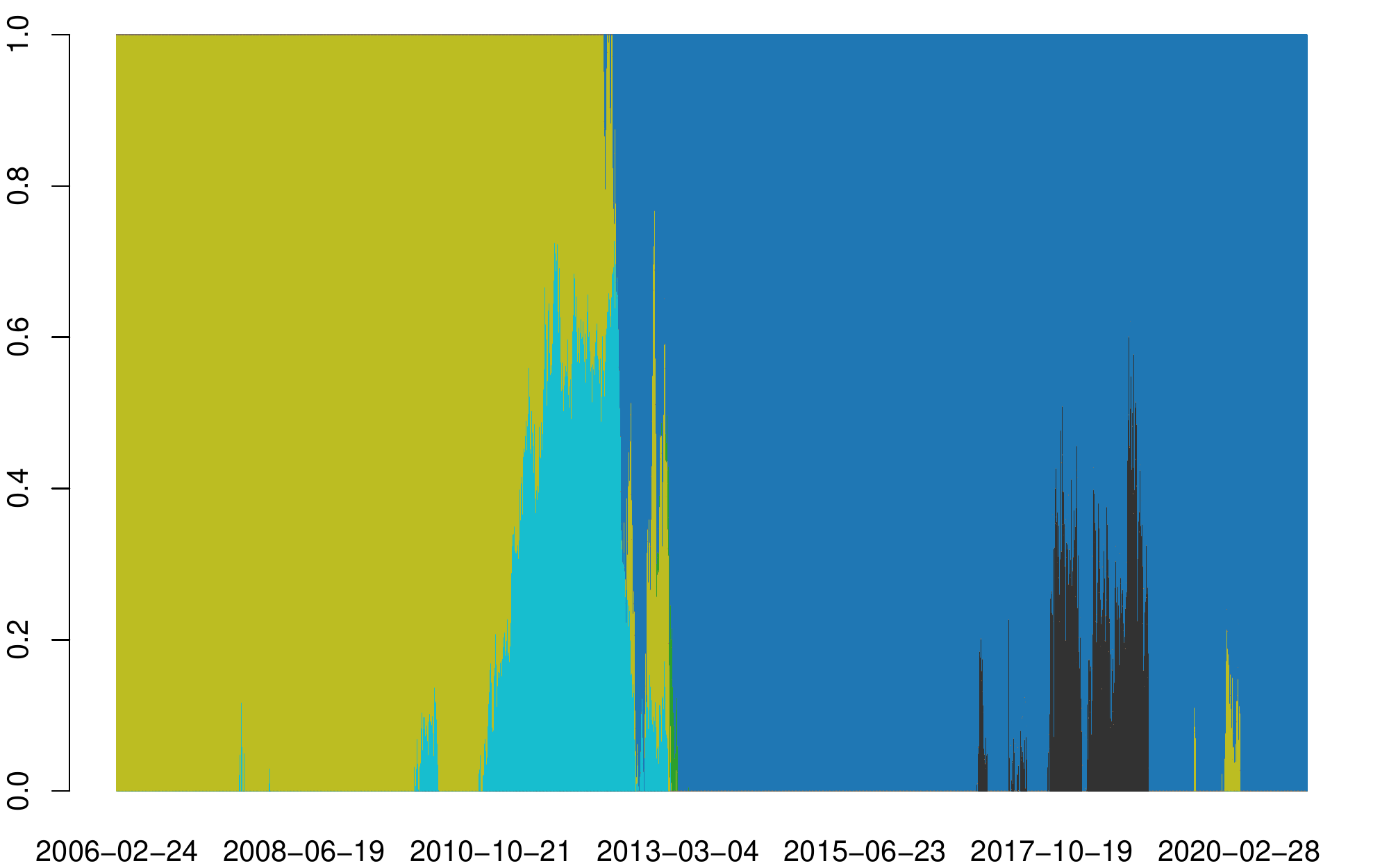}}\\
	\subfloat[\gmv{}]{\includegraphics[width = 0.4\textwidth]{./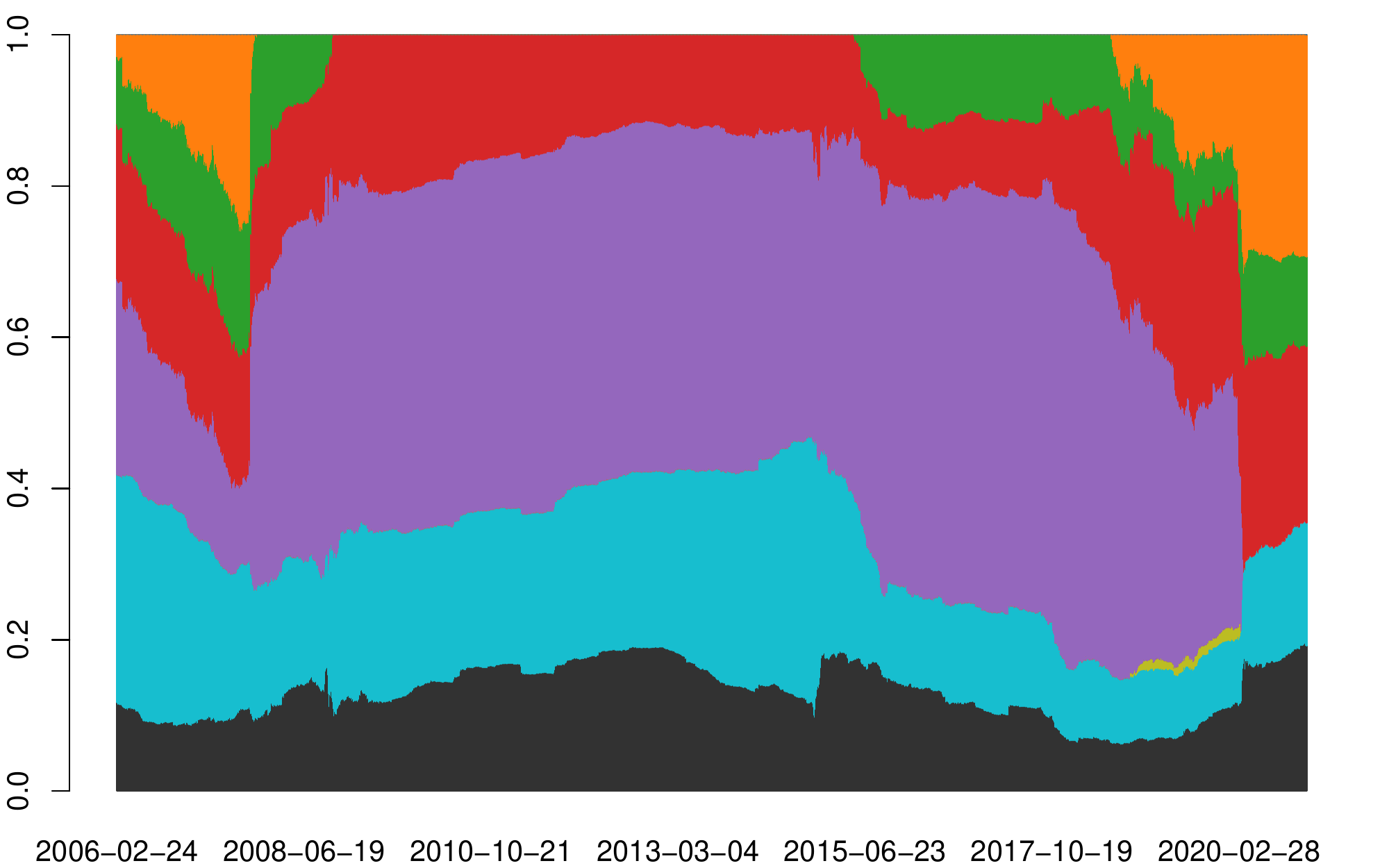}} 
	\subfloat[\ew{}]{\includegraphics[width = 0.4\textwidth]{./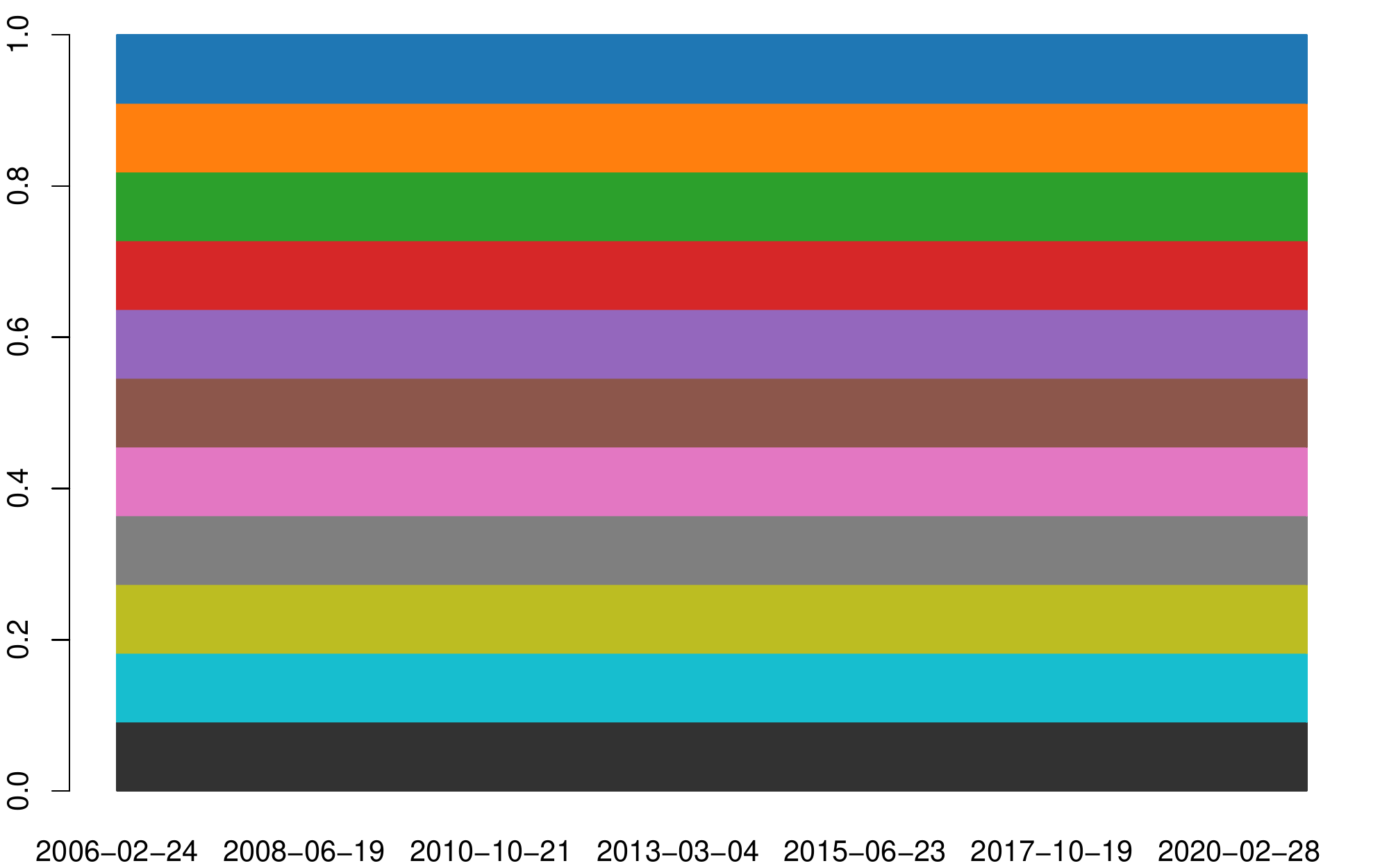}}\\
	\subfloat[\sdrp{}]{\includegraphics[width = 0.4\textwidth]{./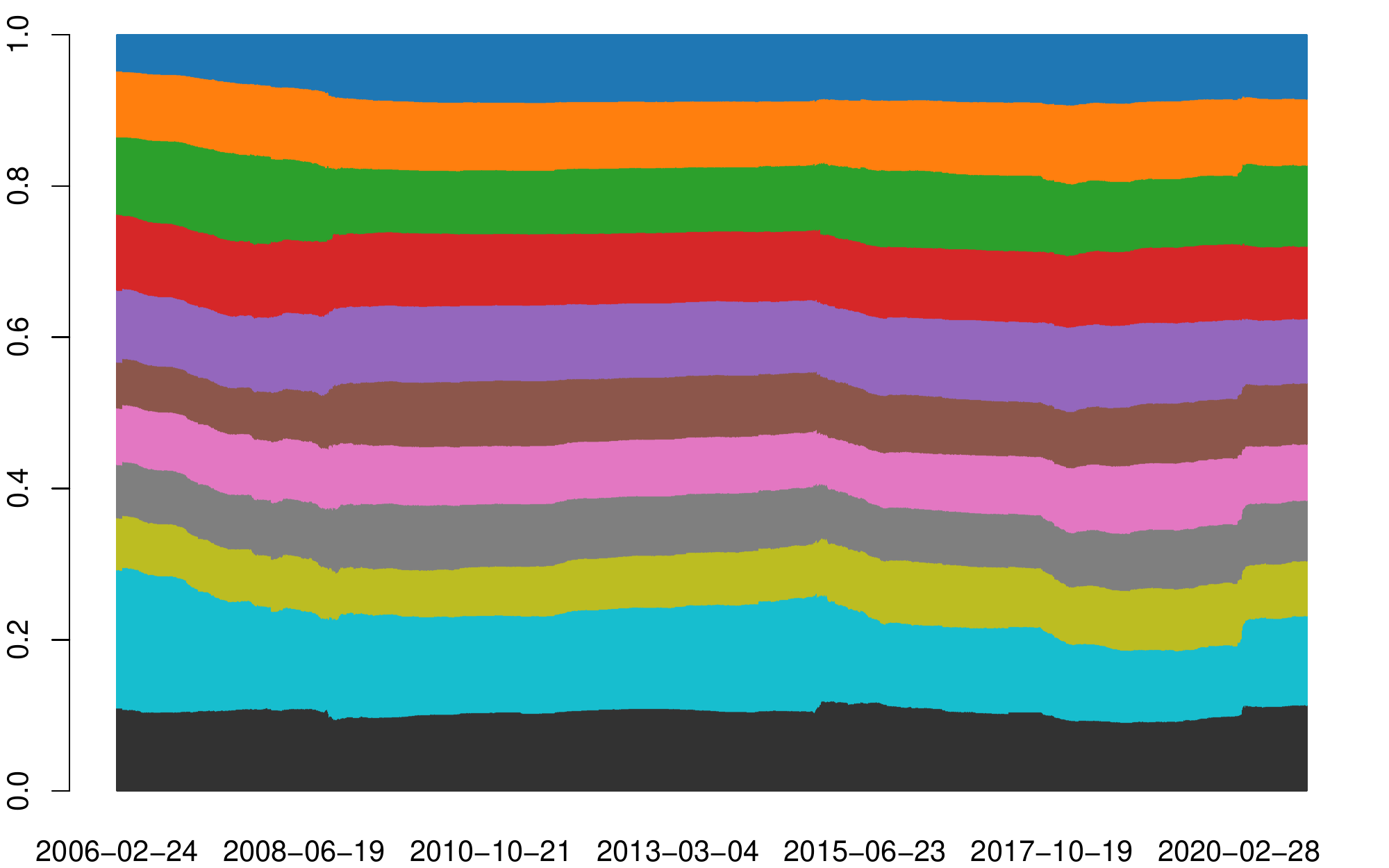}}
	\subfloat[\grp{95}{}]{\includegraphics[width = 0.4\textwidth]{./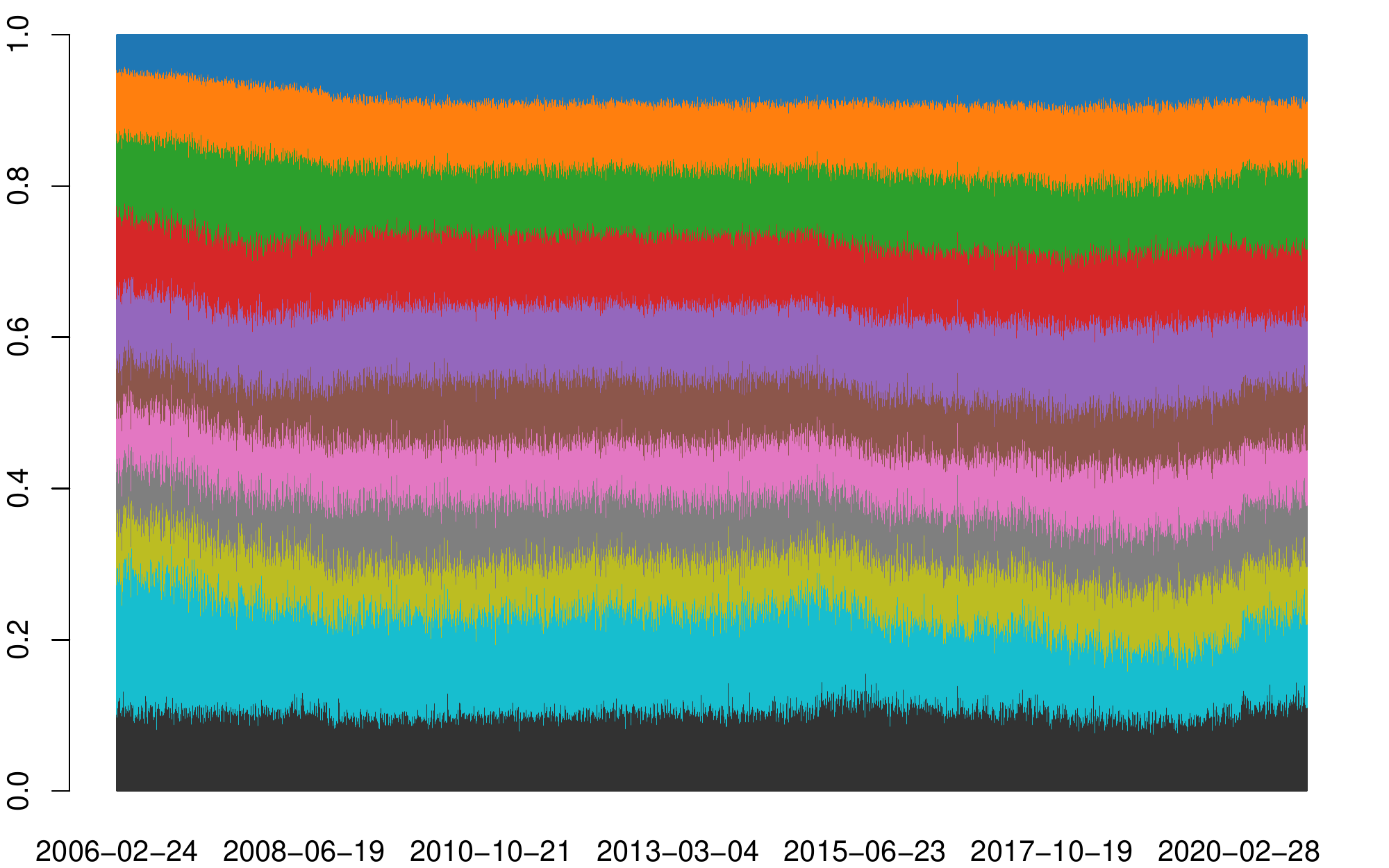}}\\
	\subfloat[\rpa{95}{}]{\includegraphics[width = 0.4\textwidth]{./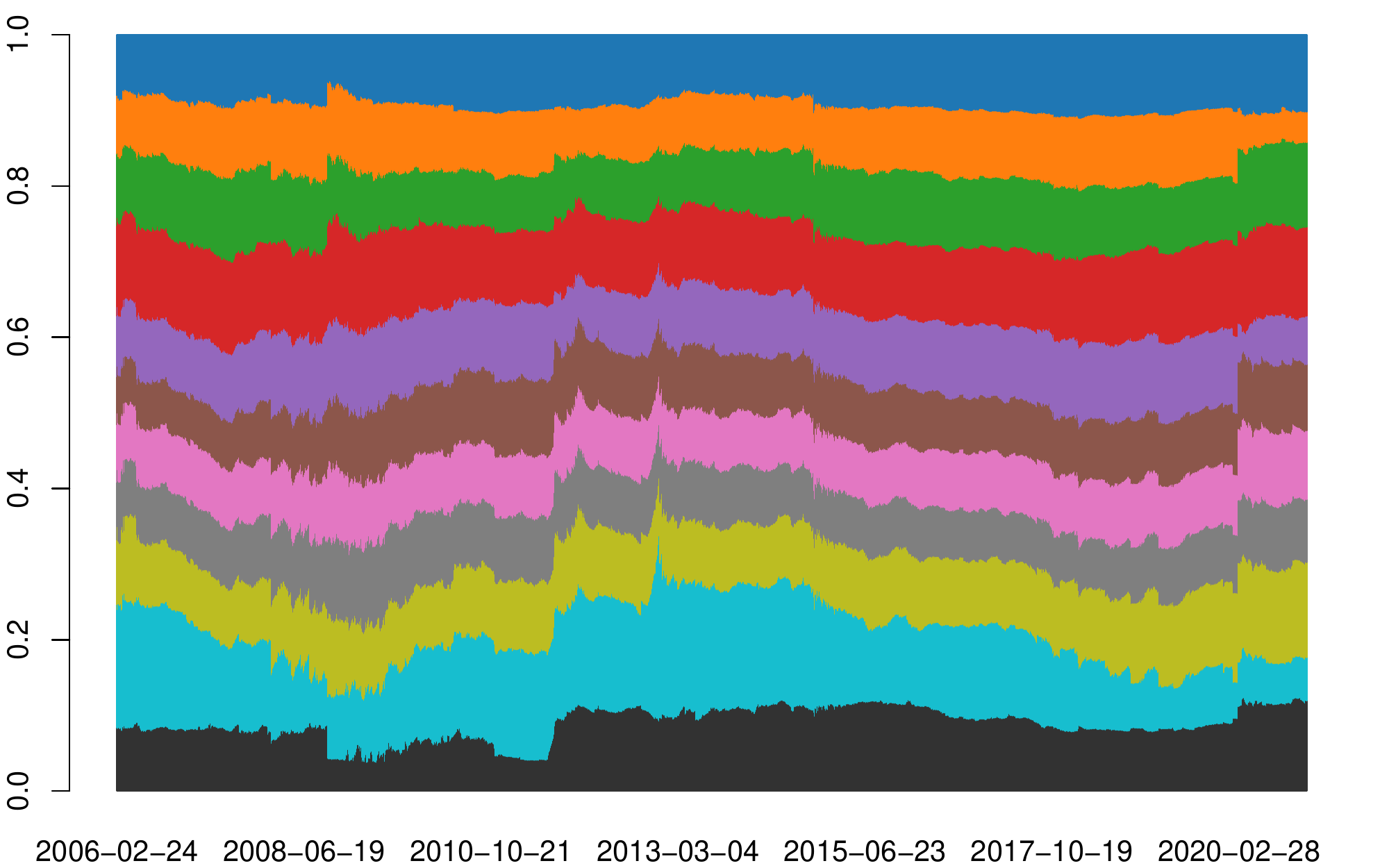}}
	\subfloat[\rpb{95}{}]{\includegraphics[width = 0.4\textwidth]{./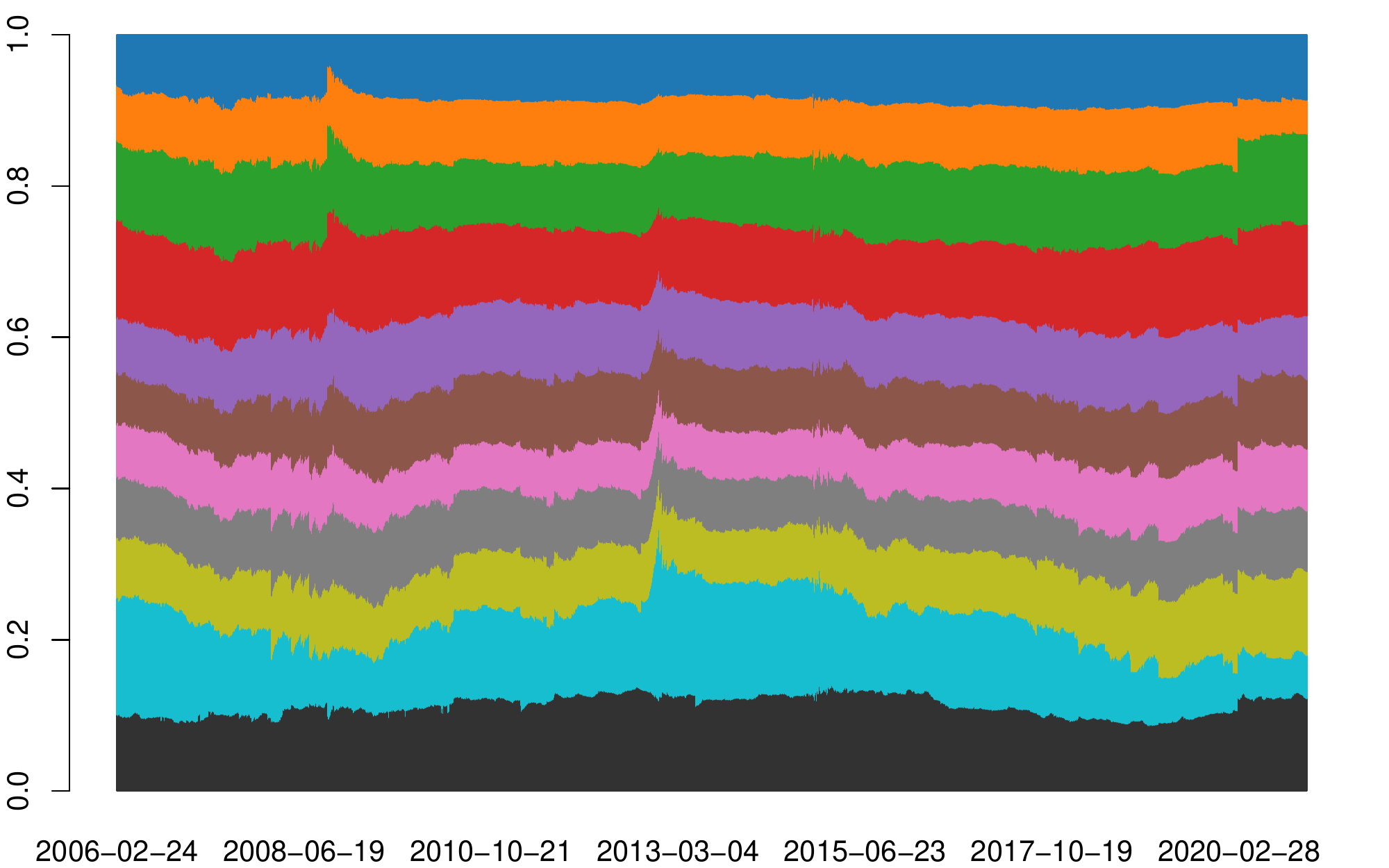}}
\caption{Allocations within each portfolio. \tikzline[fill=colAsset1] NASDAQ 100, 
	\tikzline[fill=colAsset2] S\&P 500, 
	\tikzline[fill=colAsset3] Hang Seng Index, 
	\tikzline[fill=colAsset4] FTSE 100, 
	\tikzline[fill=colAsset5] Dow Jones Industrial Average, 
	\tikzline[fill=colAsset6] DAX Performance-Index, 
	\tikzline[fill=colAsset7] Russell 2000, 
	\tikzline[fill=colAsset8] CAC 40, 
	\tikzline[fill=colAsset9] Ibovespa, 
	\tikzline[fill=colAsset10] SSE Composite Index, 
	\tikzline[fill=colAsset11] Nikkei 225}
	\label{fig:barplots}
\end{center}
\end{figure}

The weights of all 11 assets for each portfolio is presented in Figure \ref{fig:barplots}. As suggested by Figure \ref{fig:all_assets_pairs}, \msr{}, \mmv{}, and \gmv{} are the most dissimilar ones. From the weights of the \mmv{} portfolio, we see that the  \mmv{}, for some periods of time, is fully invested in one single asset or, at most, two. A similar, although not so extreme, behaviour is observed for the \msr{} and \gmv{} portfolios. Both \mmv{} and \msr{} are strongly invested on NASDAQ 100 between 2013 and 2020. During this period, NASDAQ 100 presented extremely high returns, compared to the other assets in the sample, leading to similar behaviour in the portfolios.

\begin{figure}[ht]%
	\centering
	\includegraphics[width = 0.4\textwidth]{./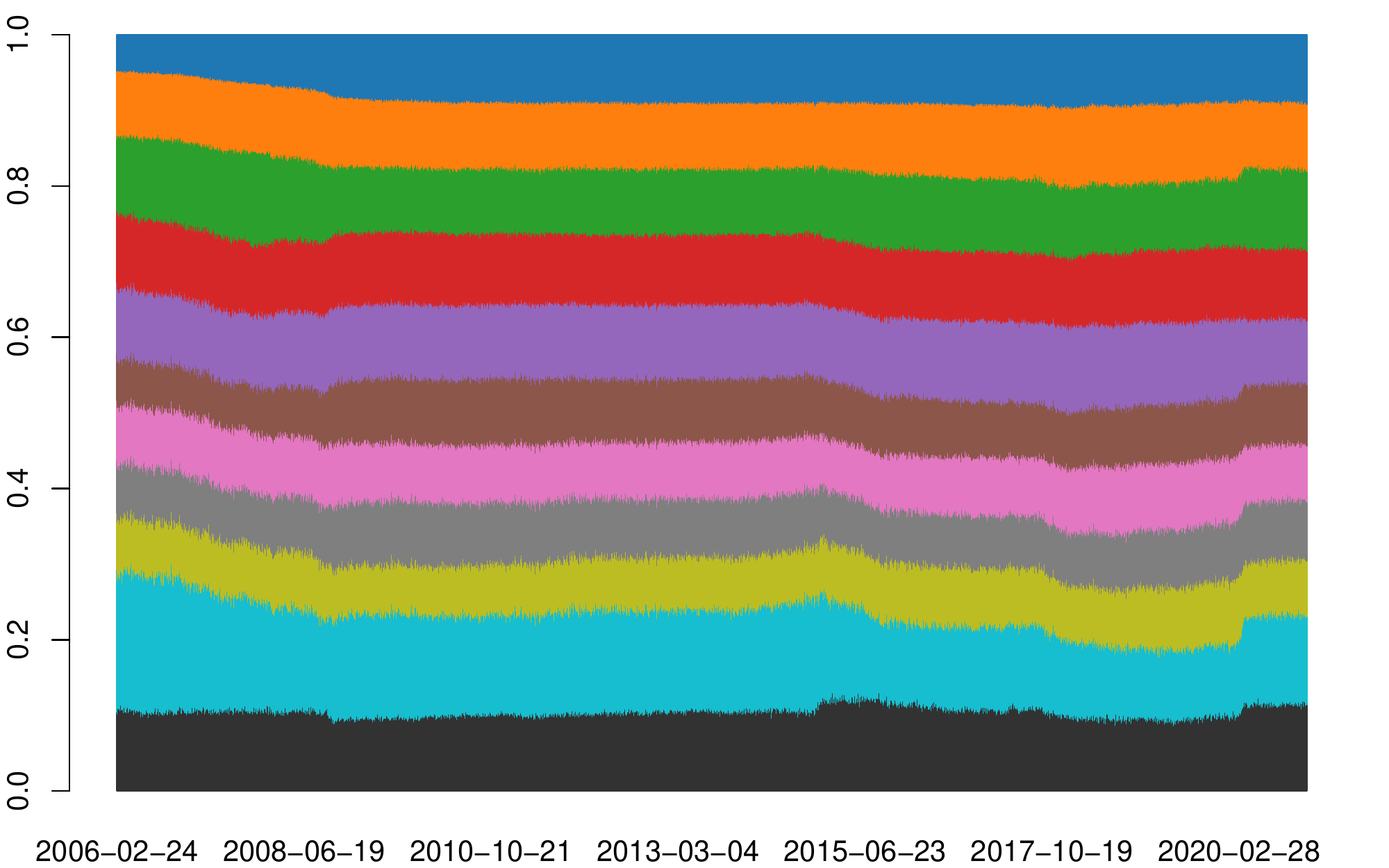}%
	\caption{Weights of the \grp{95}{} portfolio. Differently from Figure \ref{fig:barplots}, here using $N=10\,000$ samples.}%
	\label{fig:convergence_in_N}%
\end{figure}

Within the risk parity strategies, the first noticeable feature is the fact that all portfolios are much more balanced than \msr{}, \mmv{}, and \gmv{}. Due to the nature of the assets under consideration (all with volatilities between 20\% and 30\%), the \sdrp{} is, not surprisingly, similar to \ew{}. As the empirical returns are sufficiently close to zero, the ES is proportional to the standard deviation under the Gaussian hypothesis (see Example \ref{ex:Gaussian-ES}). Therefore, the allocations under the \grp{95}{} should converge to those of \sdrp{} when $N \rightarrow +\infty$. This convergence is studied in Figure \ref{fig:convergence_in_N}, where we present the \grp{95}{} portfolio generated using $N=10\,000$, instead of $N= 1\,000$ (Figure \ref{fig:barplots}) samples in Algorithm \ref{alg:ES_cp}. As expected, the \grp{}{} computed with larger sample size $N$ is closer to \sdrp{}. 

The portfolio weights under the DCC-GARCH models (\rpa{95}{} and \rpb{95}{}) are very similar, while the weights under the DCC-GJR-GARCH specification (\rpb{}) appear to be more robust. This suggests that a more flexible model may be needed to take full advantage of the risk parity strategy using ES. Overall, the major movements observed in the \sdrp{} weights are similar to those of both risk parity portfolios. For example, all risk parity portfolios start with a large proportion of wealth invested in the SSE Composite Index and the weights in that asset in the portfolios decrease steadily until the global financial crisis in 2008, after which the weights start to increase again. Differently from the \msr{} and the \mmv{}, a major change in the weights of the \rpa{95}{} and \rpb{95}{} portfolios is observed during the Covid-19 Pandemic.

It must be stressed that the behaviour of the risk parity portfolios constructed using the ES as a risk measure and complex multivariate dynamics for the returns is only currently possible due to the algorithms developed in this paper (Section \ref{sec:algorithms}).    

\section{Conclusions} \label{sec:conclusion}
We presented three optimisation algorithms
tailored to the problem of constructing risk budgeting portfolios
for coherent risk measures.
One key benefits of the proposed algorithms
is the fact that no analytical expressions for the densities of the assets' returns is needed,
indeed both our cutting planes algorithms and stochastic gradient descent algorithms rely only on \textit{scenarios} of returns.
We compare our cutting planes algorithm with standard convex optimisation solvers
and show that our algorithm is significantly faster, particularly for increasing asset dimension and number of scenarios.
Moreover, for a distortion risk measures with continuous distortion function and the Entropic VaR,
only our cutting planes algorithm converges.

We apply our proposed algorithm to construct portfolios based on 11 equity indices
and analyse the performance of the risk parity portfolios compared to other investment strategies.
In this particular application we see that some of the risk parity portfolios
are empirically similar to a portfolio with equal weights;
a conclusion to date only available for risk parity portfolios using the standard deviation as a risk measure.    
\section*{Acknowledgements}
SP would like to acknowledge support from the Natural Sciences and Engineering Research Council of Canada (grants DGECR-2020-00333, RGPIN-2020-04289).

\bibliographystyle{apalike}
\bibliography{references.bib}


\clearpage

\appendix 

\section{Numerical studies} \label{sec:appendix_numerical_studies}
In this section we provide additional plots for the numerical exercises of Section \ref{sec:numerical_studies}. Figures \ref{fig:cvar_gaussian} and \ref{fig:cvar_t} complement, respectively, the multivariate Gaussian and Student $t$ analysis for the ES with $\alpha= 0.95$ and $\alpha =0.99$. Figures \ref{fig:entropicVaR_dim_time_TRUE} and \ref{fig:distortion_dim_time_TRUE} show  the run times for the multivariate Student $t$ for the Entropic VaR and distortion risk measures, respectively.

\begin{figure}[!htp]
\centering
\subfloat[Multivariate Gaussian model with $\alpha = 0.95$.]{%
  \includegraphics[width = 0.8\textwidth]{./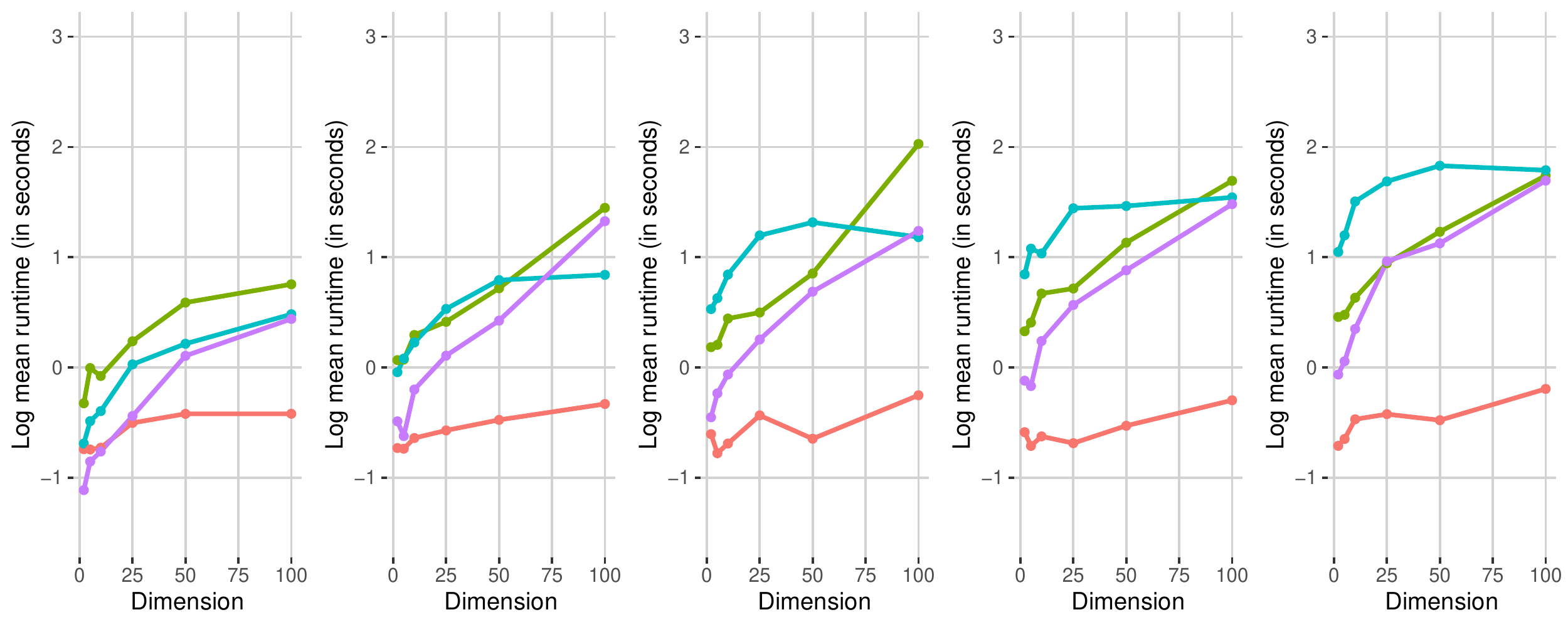}%
}\quad
\subfloat[Multivariate Gaussian model with $\alpha = 0.99$.]{%
  \includegraphics[width = 0.8\textwidth]{./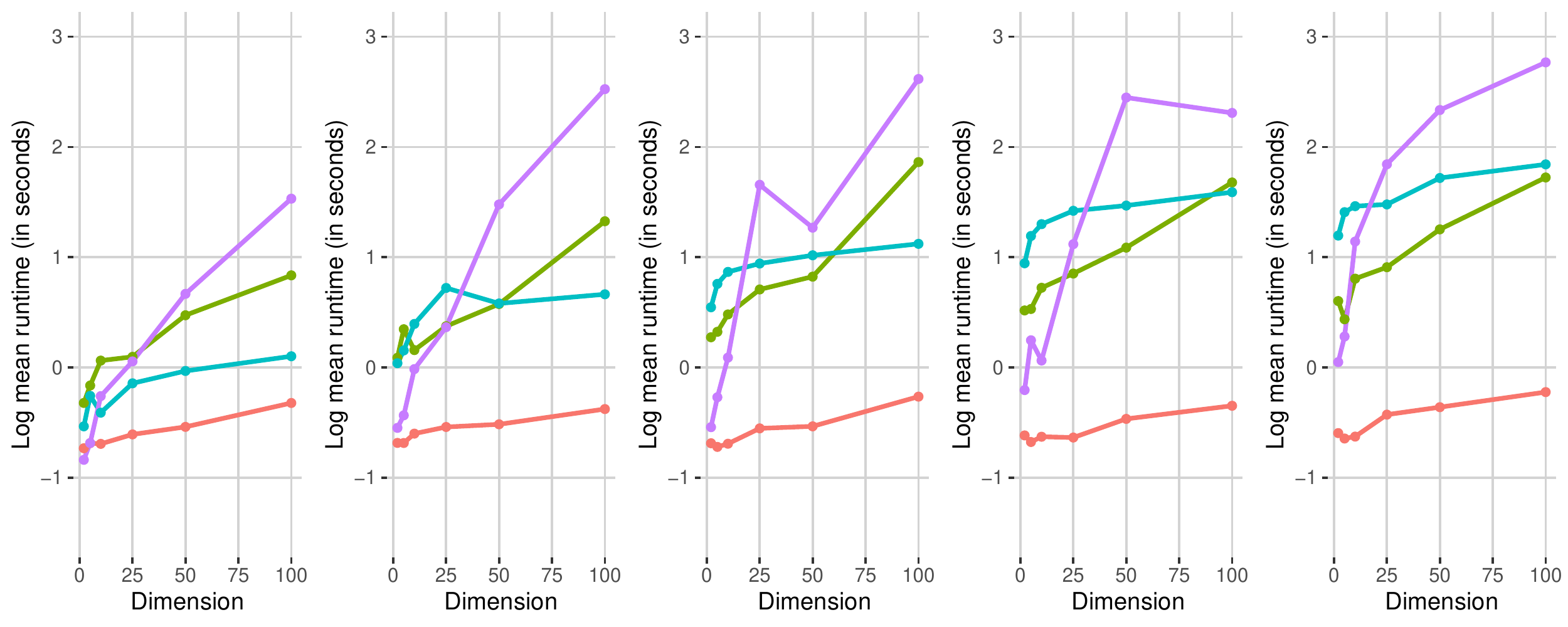}%
}

\caption{Comparison of run times for different algorithms for the ES risk budgeting problem: \tikzcircle[colSolver1, fill=colSolver1]{2pt} Cutting Planes, \tikzcircle[colSolver2, fill=colSolver2]{2pt} IpOpt, \tikzcircle[colSolver3, fill=colSolver3]{2pt} MOSEK, \tikzcircle[colSolver4, fill=colSolver4]{2pt} SCS. From left to right, $N=1\,000, \ 2\,000, \ 3\,000, \ 4\,000, \ 5\,000.$}
\label{fig:cvar_gaussian}
\end{figure}

\begin{figure}[!htp]
\centering
\subfloat[Multivariate Student $t$ model with $\alpha = 0.95$.]{%
  \includegraphics[width = 0.8\textwidth]{./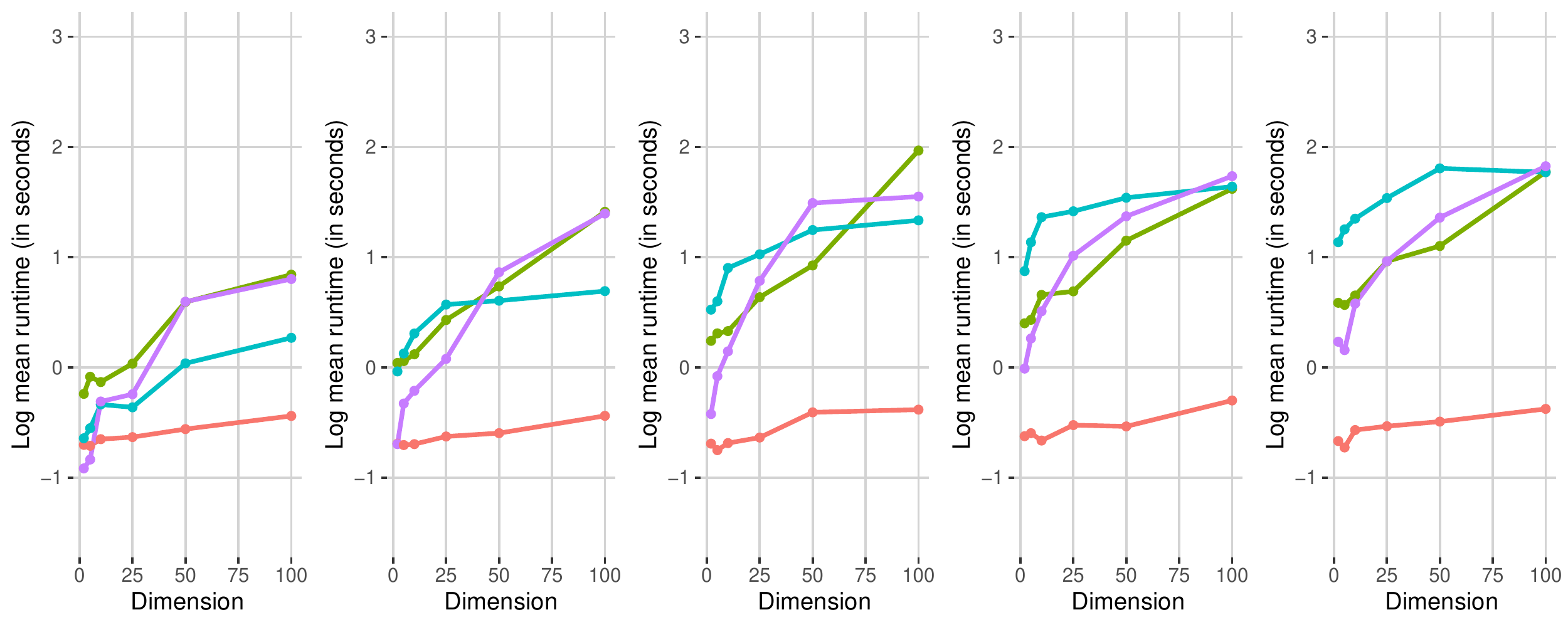}%
}\quad
\subfloat[Multivariate Student $t$ model with $\alpha = 0.99$.]{%
  \includegraphics[width = 0.8\textwidth]{./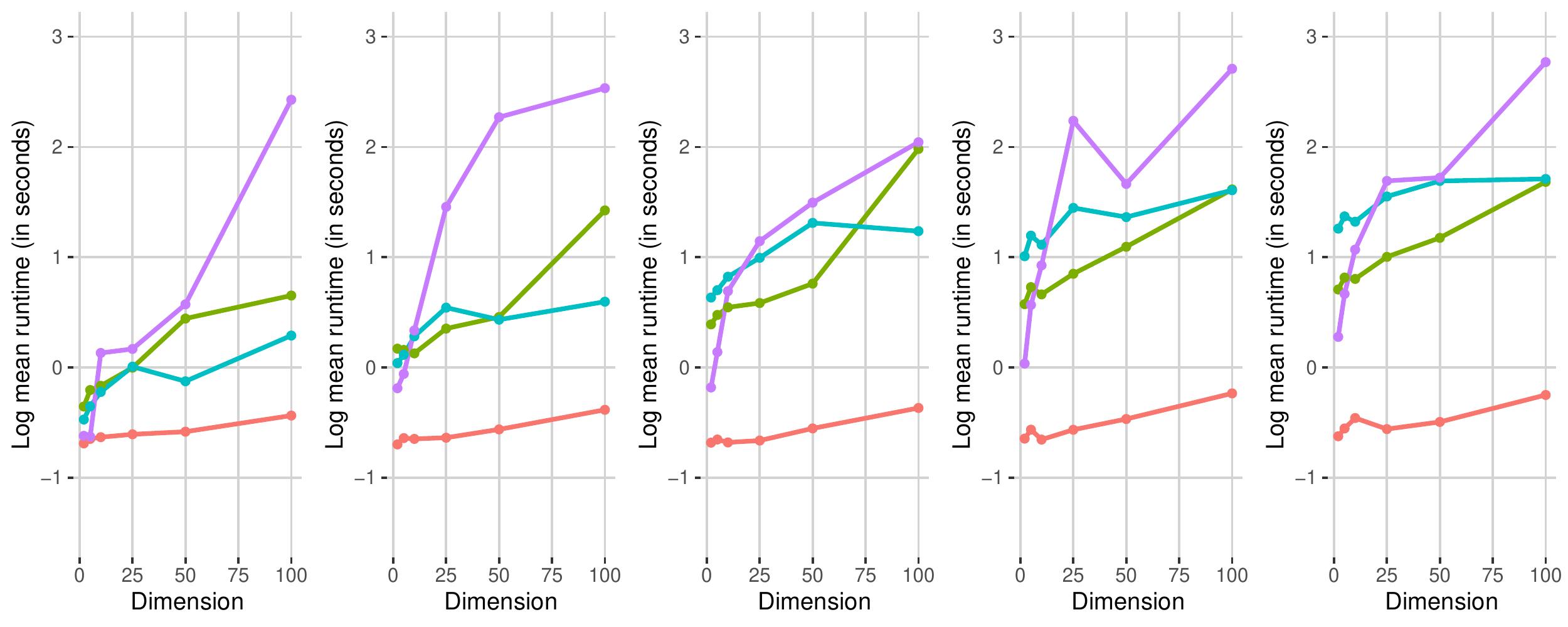}%
}
\caption{Comparison of run times for different algorithms for the ES risk budgeting problem: \tikzcircle[colSolver1, fill=colSolver1]{2pt} Cutting Planes, \tikzcircle[colSolver2, fill=colSolver2]{2pt} IpOpt, \tikzcircle[colSolver3, fill=colSolver3]{2pt} MOSEK, \tikzcircle[colSolver4, fill=colSolver4]{2pt} SCS. From left to right, $N=1\,000, \ 2\,000, \ 3\,000, \ 4\,000, \ 5\,000.$}
\label{fig:cvar_t}
\end{figure}

\begin{figure}[ht]%
	\centering
	\includegraphics[width = 0.8\textwidth]{./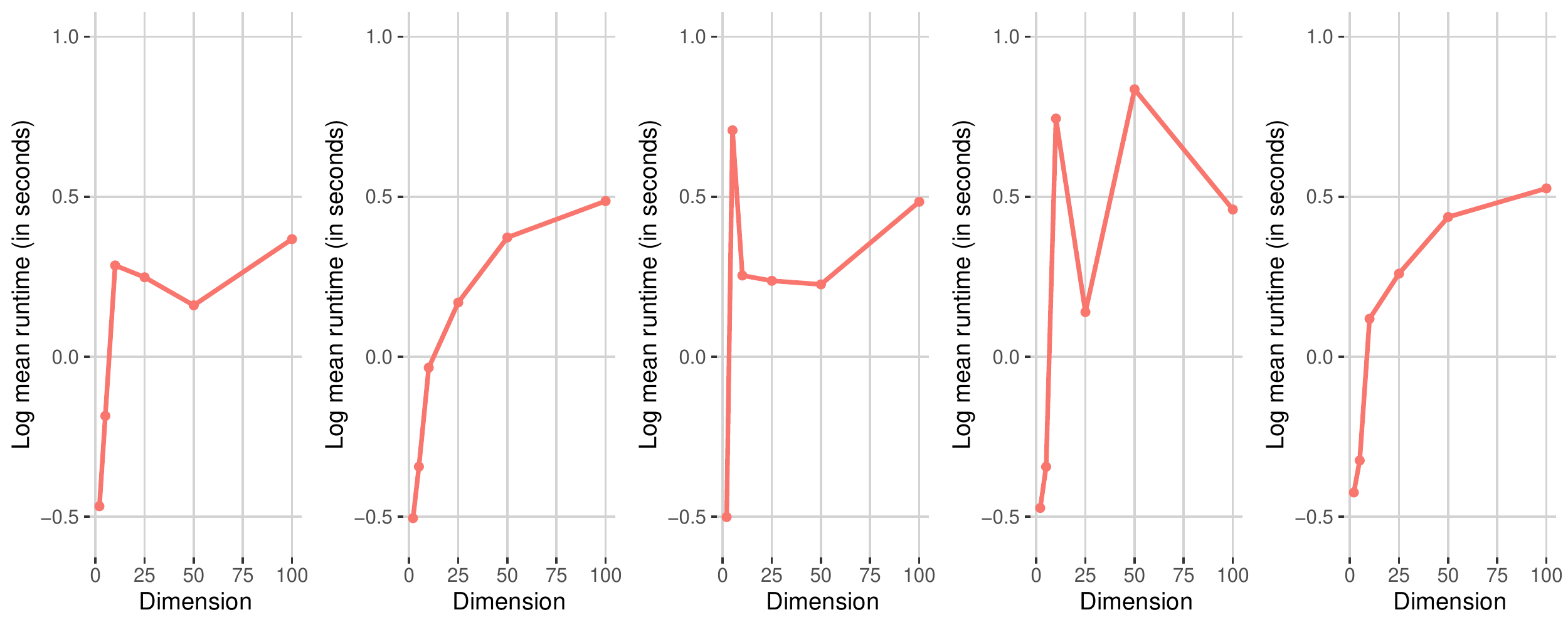}%
	\caption{Run times for the Cutting Planes algorithm for the EVaR risk budgeting problem.
    From left to right, $N=1\,000, \ 2\,000, \ 3\,000, \ 4\,000, \ 5\,000.$ All plots are based on the multivariate Gaussian model.}%
	\label{fig:entropicVaR_dim_time_TRUE}%
\end{figure}

\begin{figure}[ht]%
	\centering
	\includegraphics[width = 0.8\textwidth]{./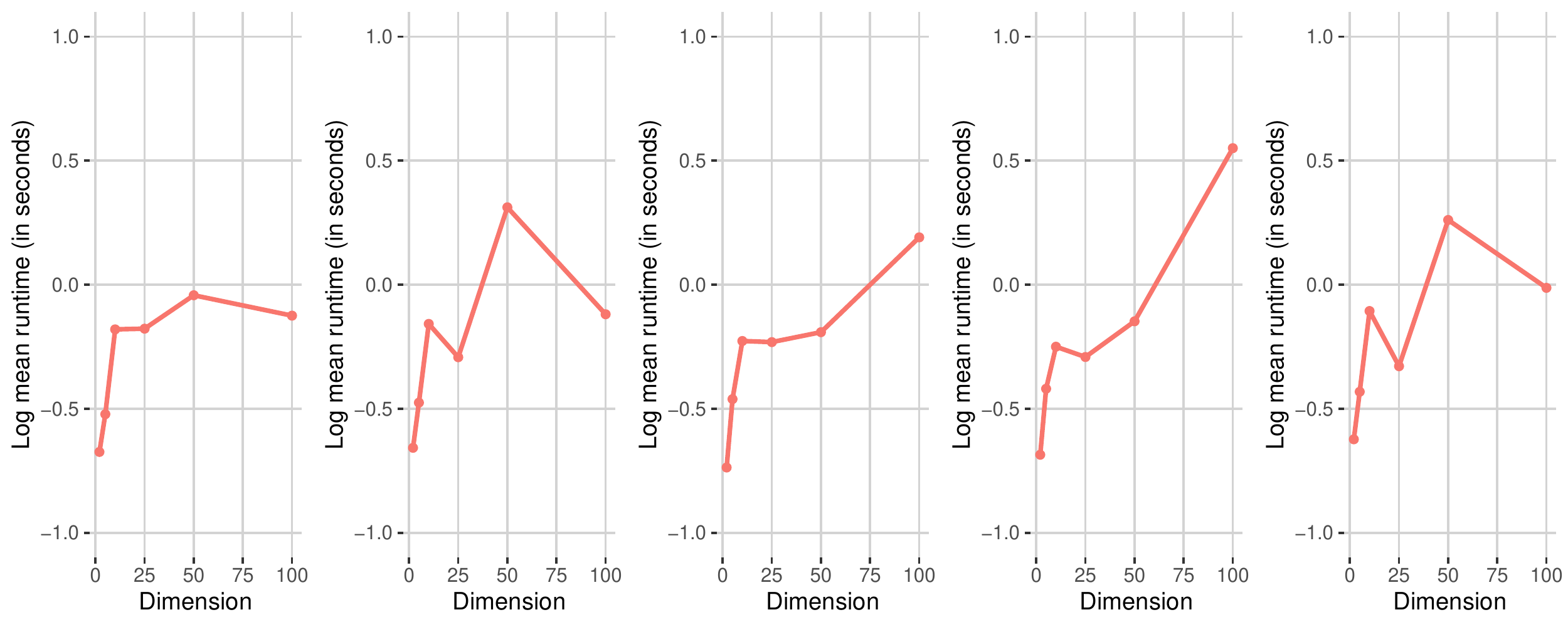}%
	\caption{Run times for the Cutting Planes algorithm for the distortion risk budgeting problem. From left to right, $N=1\,000, \ 2\,000, \ 3\,000, \ 4\,000, \ 5\,000.$ All plots are based on the multivariate Gaussian model.}%
	\label{fig:distortion_dim_time_FALSE}%
\end{figure}

\clearpage
\section{Summary statistics} \label{sec:appendix_summary_statistics}
Table \ref{tbl:portfolio_table_stats} reports the annualised volatility, annualised return, and annual Sharpe ratio, as well as the $\text{VaR}_{0.05}$, $\text{ES}_{0.05}$, and the maximum drawdown of all considered portfolios. We observe that the statistics for the risk parity portfolios with different ES security level do not change drastically.
\vspace{-0.2cm}
\begin{table}[h!]
\footnotesize
\centering
\begin{tabular}{lrrrrrr}
  \toprule\toprule
 		& Volatility & Return & Sharpe & $\text{VaR}_{0.05}$ & $\text{ES}_{0.05}$ &  Max drawdown \\ 
  \midrule
   \sdrp{}  & 18.44\% & 9.37\% & 0.508 & -1.67\% & -2.91\% & -59.89\% \\ 
  \msr{}   & 23.54\% & 12.88\% & 0.547 & -2.05\% & -3.67\% & -70.75\% \\ 
  \gmv{}   & 17.02\% & 9.71\% & 0.571 & -1.56\% & -2.69\% & -58.36\% \\ 
  \mmv{}   & 26.87\% & 14.79\% & 0.551 & -2.56\% & -4.21\% & -66.58\% \\ 
  \ew{}    & 18.89\% & 8.80\% & 0.466 & -1.69\% & -2.97\% & -59.48\% \\ 
  \midrule
  \grp80 & 18.49\% & 9.40\% & 0.508 & -1.70\% & -2.91\% & -60.74\% \\ 
  \grp85 & 18.49\% & 9.38\% & 0.507 & -1.70\% & -2.91\% & -60.62\% \\ 
  \grp90 & 18.50\% & 9.49\% & 0.513 & -1.71\% & -2.91\% & -60.38\% \\ 
  \grp95 & 18.51\% & 9.58\% & 0.517 & -1.69\% & -2.91\% & -60.36\% \\ 
  \grp96 & 18.53\% & 9.45\% & 0.510 & -1.69\% & -2.92\% & -60.32\% \\ 
  \grp97 & 18.52\% & 9.33\% & 0.504 & -1.70\% & -2.92\% & -60.24\% \\ 
  \grp98 & 18.53\% & 9.23\% & 0.498 & -1.69\% & -2.92\% & -60.23\% \\ 
  \grp99 & 18.56\% & 9.43\% & 0.508 & -1.69\% & -2.92\% & -59.99\% \\ 
  \midrule
  \rpa80 & 18.89\% & 8.89\% & 0.471 & -1.69\% & -2.99\% & -60.26\% \\ 
  \rpa85 & 18.82\% & 8.90\% & 0.473 & -1.68\% & -2.98\% & -60.40\% \\ 
  \rpa90 & 18.79\% & 8.93\% & 0.475 & -1.68\% & -2.97\% & -60.34\% \\ 
  \rpa95 & 18.88\% & 8.93\% & 0.473 & -1.69\% & -2.99\% & -60.22\% \\ 
  \rpa96 & 18.96\% & 9.04\% & 0.477 & -1.70\% & -3.00\% & -60.19\% \\ 
  \rpa97 & 19.02\% & 9.03\% & 0.475 & -1.72\% & -3.01\% & -60.28\% \\ 
  \rpa98 & 19.13\% & 8.98\% & 0.469 & -1.72\% & -3.02\% & -59.89\% \\ 
  \rpa99 & 19.33\% & 8.87\% & 0.459 & -1.70\% & -3.06\% & -59.83\% \\ 
  \midrule
  \rpb80 & 18.89\% & 8.89\% & 0.471 & -1.68\% & -2.99\% & -60.34\% \\ 
  \rpb85 & 18.83\% & 8.85\% & 0.470 & -1.67\% & -2.98\% & -60.27\% \\ 
  \rpb90 & 18.76\% & 8.90\% & 0.474 & -1.66\% & -2.97\% & -60.37\% \\ 
  \rpb95 & 18.71\% & 8.78\% & 0.469 & -1.66\% & -2.96\% & -60.25\% \\ 
  \rpb96 & 18.75\% & 8.77\% & 0.468 & -1.66\% & -2.97\% & -60.28\% \\ 
  \rpb97 & 18.75\% & 8.93\% & 0.476 & -1.68\% & -2.96\% & -60.19\% \\ 
  \rpb98 & 18.77\% & 9.00\% & 0.479 & -1.67\% & -2.97\% & -60.21\% \\ 
  \rpb99 & 18.76\% & 8.90\% & 0.474 & -1.66\% & -2.96\% & -59.69\% \\  \bottomrule\bottomrule
\end{tabular}
\caption{Statistics for all portfolios. The volatility, return and Sharpe ratio are annualised.}
\label{tbl:portfolio_table_stats}
\end{table}

\end{document}